\def\bib{\ifthenelse{\boolean{printBibInSubfiles}}
        {\bibliographystyle{splncs04}
         \bibliography{ref,abbrv,koba}}
        {}
    }
\newcommand\typical{\textsc{TyPiCal}}
\newcommand\seq[1]{\tilde{#1}}
\newcommand\zeroexp{\mathbf{0}}
\newcommand\outexp[3]{#1!(#2;#3).}
\newcommand\outatom[3]{#1!(#2;#3)}
\newcommand\soutatom[2]{#1!(#2)}
\newcommand\inexp[3]{#1?(#2;#3).}
\newcommand\inatom[3]{#1?(#2;#3)}
\newcommand\sinexp[2]{#1?(#2).}
\newcommand\sinatom[2]{#1?(#2)}
\newcommand\rinexp[3]{*#1?(#2;#3).}
\newcommand\srinexp[2]{*#1?(#2).}
\newcommand\srinatom[2]{*#1?(#2)}
\newcommand\letexp[3]{\mathbf{let}\ #1=#2\ \mathbf{in}\ #3}
\newcommand\ndletatom[1]{\mathbf{let}\ #1=\ndint \ \mathbf{in}}
\newcommand\ndletsatom[1]{\mathbf{let}\ \seq{#1}=\seq{\ndint} \ \mathbf{in}}
\newcommand\assexp[1]{\mathbf{Assume}(#1);}
\newcommand\assatom[1]{\mathbf{Assume}(#1)}
\newcommand\PAR{\mathbin{\,|\,}}
\newcommand\nuexp[1]{(\nu #1)}
\newcommand\COL{\mathbin{:}}
\newcommand\red{\to}
\newcommand\Chty[3]{\mathbf{ch}_{#1}(#2;#3)}
\newcommand\sChty[2]{\mathbf{ch}_{#1}(#2)}
\newcommand\mty{\sigma}
\newcommand\p{\vdash}
\newcommand\reg{\rho}
\newcommand\Def{\mathcal{D}}
\newcommand\Exp{E}
\newcommand\skipexp{(\,)}
\newcommand\nondet{\oplus}
\newcommand{\ASSUME}{\textbf{Assume}}
\newcommand\mrg\cup
\newcommand\sred{\seqto}
\newcommand{\nsred}{\dashrightarrow}
\newcommand\nsredv[1]{\nsred_{#1}}
\newcommand\Fname[1]{f_{#1}}
\newcommand\regof[1]{\rho_{#1}}
\newcommand\set[1]{\{#1\}}
\newcommand\prog[2]{(#1,#2)}
\newcommand\pre{\mathit{pred}}
\newcommand\fib{\mathit{fib}}
\newcommand\DEC{P_\text{dec}}
\newcommand\DECclosed{\DEC'}
\newcommand\TRUE{\mathbf{true}}
\newcommand\FALSE{\mathbf{false}}
\newcommand\rchepsilon[3]{\mathbf{ch}_{#1}(#2;#3)}
\newcommand\OK{\mathbf{ok}}
\newcommand\imp{\Longrightarrow}
\newcommand\hoice{\textsc{HoIce}}
\newcommand\inty[1]{#1_I}
\newcommand\outy[1]{#1_O}
\newcommand\subtype{<:}
\newcommand\ult{\textsc{Ultimate Automizer}}
\newcommand\zthree{\textsc{Z3}}
\newcommand\tyok[2]{#1\p #2\ \mathbf{ok}}
\newcommand\tyenvok[3]{#1; #2; #3 \p \mathbf{ok}}
\newcommand\FV{\mathbf{FV}}
\newcommand\dom{\mathit{dom}}
\newcommand\testcase[1]{``\textrm{#1}''}
\newcommand\Proj[2]{#1\mathbin{\downarrow_{#2}}}
\newcommand{\len}[1]{\text{len}(#1)}
\newcommand\ty\iota
\newcommand\chty\kappa
\newcommand\env\Gamma
\newcommand\predenv\Phi
\newcommand\chenv\Delta
\newcommand{\sch}[3][\reg]{\Chty{#1}{#2}{#3}}
\newcommand\schan{\sch{\reg}{\seq{\ty}}{\seq{\chty}}}
\newcommand\srchan\schan
\newcommand\pred\phi
\newcommand\predvar{P}
\newcommand{\rch}[4]{\textbf{ch}_{#1}(#2; #3; #4)}
\newcommand{\ioch}[6]{\textbf{ch}_{#1}(#2; #3; #4; #5; #6)}
\newcommand{\fdef}[3]{#1(#2) = #3}
\newcommand{\ndint}{\star}
\newcommand{\ndletb@se}[3]{\textbf{let }#1 = #2 \textbf{ in } #3}
\newcommand{\ndletst@r}[2]{\ndletb@se{#1}{\ndint}{#2}}
\newcommand{\ndletnost@r}[2]{\ndletb@se{\seq{#1}}{\seq{\ndint}}{#2}}
\newcommand{\ndlet}{\@ifstar{\ndletst@r}{\ndletnost@r}}
\newcommand{\ifexp}[3]{\textbf{if}\ #1 \allowbreak \ \textbf{then}\ #2 \allowbreak \ \textbf{else}\ #3}
\newcommand{\op}{\mathit{op}}
\newcommand{\cname}[1]{\mathit{#1}}
\newcommand{\fname}[1]{\mathit{#1}}
\newcommand{\sequiv}{\equiv}
\newcommand\piequiv{\sequiv_{\pi}}
\newcommand{\expequiv}{\sequiv_{\text{E}}}
\newcommand\seqto{\rightsquigarrow}
\newcommand{\subdef}{\trianglelefteq}
\newif\if@draft
\newcommand{\sk}[1]
{
\if@draft%
{\small\textcolor{blue}{[#1 -sk]}}%
\else\ignorespaces%
\fi
}
\newcommand{\skchanged}[1]
{\if@draft\textcolor{blue}{#1}\else#1\fi}
\newcommand{\changed}[1]
{\if@draft\textcolor{red}{#1}\else#1\fi}
\newcommand{\nk}[1]
{
\if@draft%
{\small\textcolor{red}{[#1 -nk]}}%
\else\ignorespaces%
\fi
}
\newcommand{\sh}[1]
{
\if@draft%
{\small\textcolor{magenta}{[#1 -sh]}}%
\else\ignorespaces%
\fi
}
\newcommand{\shchanged}[1]
{\if@draft\textcolor{magenta}{#1}\else#1\fi}
\newcommand{\ry}[1]
{
\if@draft%
{\small\textcolor{green}{[#1 -ry]}}%
\else\ignorespaces%
\fi
}
\newif\if@aplas
\newcommand{\ifaplas}[2]
{\if@aplas#1\else#2\fi}
   \def\@citecolor{blue}%
   \def\@urlcolor{blue}%
   \def\@linkcolor{blue}%
\def\orcidID#1{\smash{\href{http://orcid.org/#1}{\protect\raisebox{-1.25pt}{\protect\includegraphics{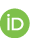}}}}}
\begin{document}
\setboolean{printBibInSubfiles}{false}
\title{Termination Analysis for the $\pi$-Calculus by Reduction to Sequential Program Termination}
\titlerunning{Termination Analysis for the \texorpdfstring{$\pi$}{Pi}-Calculus}
\author{
  Tsubasa Shoshi\inst{1}\orcidID{0000-0002-8164-0995} \and 
  Takuma Ishikawa\inst{1} \and
  Naoki Kobayashi\inst{1}\orcidID{0000-0002-0537-0604} \and
  Ken Sakayori\inst{1}\orcidID{0000-0003-3238-9279} \and
  Ryosuke Sato\inst{1}\orcidID{0000-0001-8679-2747} \and
  Takeshi Tsukada\inst{2}\orcidID{0000-0002-2824-8708}
}
\authorrunning{T. Shoshi et al.}
\institute{
  The University of Tokyo, Japan
  \and
  Chiba University, Japan
}
\maketitle              %
\begin{abstract}  %
  We propose an automated method for proving termination of \(\pi\)-calculus processes,
  based on a reduction to termination of sequential programs: 
  we translate a \(\pi\)-calculus process to a sequential program,
  so that the termination of the latter implies that of the former.
  We can then use an off-the-shelf termination
  verification tool to check termination of the sequential program.
  Our approach has been partially inspired by Deng and Sangiorgi's termination analysis for
  the \(\pi\)-calculus, and checks that there is no infinite chain of communications on
  replicated input channels, by converting such a chain of communications 
  to a chain of recursive function calls in the target sequential program.
  We have implemented an automated tool based on the proposed method and confirmed its effectiveness.

\end{abstract}

\section{Introduction}  \label{sec:introduction}

\newcommand\inexpIntro[3]{#1?(#2,#3).}
\newcommand\rinexpIntro[3]{*#1?(#2,#3).}
\newcommand\outexpIntro[3]{#1!(#2,#3).}
\newcommand\outatomIntro[3]{#1!(#2,#3)}

We propose a fully automated method for proving termination of \(\pi\)-calculus processes.
Although there have been a lot of studies on termination analysis for the \(\pi\)-calculus
and related calculi~\cite{Deng06IC,Demangeon07,SangiorgiTermination,KobayashiHybrid,Yoshida04IC,DBLP:journals/jlp/DemangeonHS10,Venet98SAS}, most of them have been rather theoretical,
and there have been surprisingly little efforts in developing  fully automated termination
verification methods and tools based on them. To our knowledge,
Kobayashi's \typical{}~\cite{TyPiCal,KobayashiHybrid} is the only exception that
can prove termination of \(\pi\)-calculus processes (extended with natural numbers)
fully automatically, but its termination analysis is quite limited (see Section~\ref{sec:relatedwork}).

Our method is based on a reduction to termination analysis for sequential programs:
we translate a \(\pi\)-calculus process \(P\) to a sequential program \(S_P\), so that
if \(S_P\) is terminating, so is \(P\). The reduction allows us to use
powerful, mature methods and tools
for termination analysis of sequential programs~\cite{heizmann2016ultimate,freqterm,DBLP:conf/lics/PodelskiR04,Kuwahara2014Termination,DBLP:journals/cacm/CookPR11}.

The idea of the translation is to convert a chain of communications on replicated input
channels to a chain of recursive function calls of the target sequential program.
Let us consider the following Fibonacci process:
\begin{align*}
    & \rinexpIntro{\fib}{n}{r}
        \ifexp{n<2}{ \soutatom{r}{1} \\ &\quad}
                   { \nuexp{s_1} \nuexp{s_2} (\outatomIntro{\fib}{n-1}{s_1} \PAR \outatomIntro{\fib}{n-2}{s_2} \PAR \sinexp{s_1}{x}\sinexp{s_2}{y}\soutatom{r}{x+y}) \\}
    & \PAR \outatomIntro{\fib}{m}{r}
\end{align*}
Here, the process
$\rinexpIntro{\fib}{n}{r} \ldots$ is a function server that computes the \(n\)-th Fibonacci number
in parallel and returns the result to \(r\),
and $\outatom{\fib}{m}{r}$ sends a request for computing the \(m\)-th Fibonacci number;
those who are not familiar with the syntax of the \(\pi\)-calculus may wish to consult
Section~\ref{sec:targetlanguage} first.
To prove that the process above is terminating for any integer \(m\),
it suffices to show that there is no infinite chain of communications on $\fib$:
\[
    \fib(m,r) \to \fib(m_1,r_1) \to \fib(m_2,r_2) \to \cdots.
\]
We convert the process above to the following program:\footnote{The actual translation
  given later is a little more complex.}
\begin{verbatim}
 let rec fib(n) = if n<2 then () else (fib(n-1) [] fib(n-2)) in
 fib(m)
\end{verbatim}
Here, \texttt{[]} represents the non-deterministic choice.
Note that, although the calculation of Fibonacci numbers is not preserved,
for each chain of communications on \texttt{fib}, there is a corresponding
sequence of recursive calls:
\[
\mathtt{fib}(m) \to \mathtt{fib}(m_1) \to \mathtt{fib}(m_2) \to \cdots.
\]
Thus, the termination of the sequential program above implies the termination of
the original process.
As shown in the example above, (i) each communication on a replicated input channel
is converted to a function call, (ii) each communication on a non-replicated input
channel is just removed (or, in the actual translation, replaced by a call of
a trivial function defined by \(f(\seq{x})=(\,)\)), and (iii) parallel composition
is replaced by a non-deterministic choice.
We formalize the translation outlined above and prove its correctness.

The basic translation sketched above sometimes loses too much information.
For example, consider the following process:
\begin{align*}
    & \rinexpIntro{\pre}{n}{r} \soutatom{r}{n-1} \\
    & \PAR \rinexpIntro{f}{n}{r} \ifexp{n<0}{ \soutatom{r}{1} }
                                       { \nuexp{s} (\outatomIntro{\pre}{n}{s} \PAR \sinexp{s}{x}\outatomIntro{f}{x}{r}) } \\
    & \PAR \outatomIntro{f}{m}{r}
\end{align*}
The translation sketched above would yield:
\begin{verbatim}
  let pred(n) = n-1 in
  let rec f(n) = if n<0 then () else (pred(n) [] f(*)) in
  f(m)
\end{verbatim}
Here, \texttt{*} represents a non-deterministic integer: since we have removed
the input $\sinatom{s}{x}$, we do not have information about the value of \( x \).
As a result, the sequential program above is non-terminating, although the original
process is terminating.
To remedy this problem, we also refine the basic translation above by using a refinement
type system for the \(\pi\)-calculus. Using the refinement type system,
we can infer that the value of \(x\) in the original process is less than \(n\),
so that we can refine the definition of \texttt{f} to:
\begin{verbatim}
 let rec f(n) = ... else (pred(n) [] let x=* in assume(x<n);f(x))
\end{verbatim}
The target program is now terminating, from which
we can deduce that the original process is also terminating.
We have implemented an automated tool based on the refined translation above.

The contributions of this paper are summarized as follows.
\begin{itemize}
\item The formalization of the basic translation from the \(\pi\)-calculus
  (extended with integers) to sequential programs, and a proof of its correctness.
\item The formalization of a refined translation based on a refinement type system.
\item An implementation of the refined translation, including automated refinement type
  inference based on CHC solving, and experiments to evaluate the effectiveness of
  our method.
\end{itemize}

The rest of this paper is structured as follows.
Section~\ref{sec:targetlanguage} introduces the source and target languages
of our translation.
Section~\ref{sec:approach} 
formalizes the basic translation, and proves its correctness.
Section~\ref{sec:refinement} refines the basic translation by using a refinement type system.
Section~\ref{sec:implementation} reports an implementation and experiments.
Section~\ref{sec:relatedwork} discusses related work,
and Section~\ref{sec:conclusion} concludes the paper.

\section{Source and Target Languages}  \label{sec:targetlanguage}

This section introduces the source and target languages for our reduction.
The source language is the
polyadic \(\pi\)-calculus~\cite{milner1993polyadic} extended with
integers and conditional expressions,
and the target language is a first-order functional language with non-determinism.

\subsection{$\pi$-Calculus}
\subsubsection{Syntax}

Below we assume
a countable set of variables ranged over by \(x, y, z, w,\!\ldots\)
and write \( \mathbb{Z} \) for the set of integers, ranged
over by \( i \).  We write $\tilde{\cdot}$ for (possibly empty) finite
sequences; for example, $\tilde{x}$ abbreviates a sequence
$x_1,\dots,x_n$.  We write \( \len{\tilde{x}} \) for the length of \(\seq{x}\) and
\(\epsilon\) for the empty sequence.

The sets of \emph{processes} and \emph{simple expressions},
ranged over by $P$ and $v$ respectively, are defined inductively
by: %
\begin{align*}
    &P \mbox{ (processes) }::= \ \zeroexp \mid \outexp{x}{\seq{v}}{\seq{w}}P \mid \inexp{x}{\seq{y}}{\seq{z}}P \mid \rinexp{x}{\seq{y}}{\seq{z}}P \mid (P_1 \PAR P_2) \mid \nuexp{x \COL \chty} P \\
    &\hphantom{P \mbox{ (processes) }::=}     \mid \ifexp{v}{P_1}{P_2} \mid \ndlet{x}{P} \\
    &v \mbox{ (simple expressions) }::= \ x \mid i \mid \op(\tilde{v})
\end{align*}
The syntax of processes on the first line is fairly standard, except that
the values sent along each channel consist of two parts: \(\seq{v}\) for integers,
and \(\seq{w}\) for channels; this is for the sake of technical convenience in
presenting the translation to sequential programs. The process \(\zeroexp\)
denotes an inaction, \(\outexp{x}{\tilde{v}}{\tilde{w}}P\) sends
a tuple \((\tilde{v},\tilde{w})\) along the channel \(x\) and behaves like \(P\),
and the process \(\inexp{x}{\tilde{y}}{\tilde{z}}P\) receives
a tuple \((\tilde{v},\tilde{w})\) along the channel \(x\), and behaves like
\([\seq{v}/\seq{y}, \seq{w}/\seq{z}]P\). We often just write \(\seq{v}\) for
\(\seq{v};\epsilon\) or \(\epsilon;\seq{v}\).
The process \(\rinexp{x}{\seq{y}}{\seq{z}}P\) represents infinitely many copies
of \(\inexp{x}{\seq{y}}{\seq{z}}P\) running in parallel.
The process \(P_1\PAR P_2\) runs \(P_1\) and \(P_2\) in parallel,
and \(\nuexp{x\COL\chty}{P}\) creates a fresh channel \shchanged{$x$} of type \(\chty\) (where types
will be introduced shortly) and behaves like \(P\).
The process \(\ifexp{v}{P_1}{P_2}\) executes \(P_1\) if the value of \(v \) is
non-zero, and \(P_2\) otherwise.
The process \(\ndlet{x}{P}\) instantiates the variables \(\seq{x}\) to
some integer values in a non-deterministic manner, and then behaves like \(P\).
The meta-variable \( \op \) ranges over integer operations such as \( + \) or \( \le \).

The free and bound variables are defined as usual.
The only binders are \(\nuexp{x\COL\chty}\)
(which binds \(x\)), \(\ndletsatom{x}\) (which binds \(\seq{x}\)),
\(\inexp{x}{\seq{y}}{\seq{z}}\)
and \(\rinexp{x}{\seq{y}}{\seq{z}}\) (which bind \(\seq{y}\) and \(\seq{z}\)).
Processes are identified up to renaming of bound variables,
and we implicitly apply \( \alpha \)-conversions as necessary.

We write \(P\red Q\) for the standard one-step reduction relation on processes.
The base cases of the communication are given by:
\begin{align*}
    \inexp{x}{\seq{y}}{\seq{z}}P_1 \PAR  \outexp{x}{\seq{v}}{\seq{w}}P_2 &\red [\seq{i}/ \seq{y}, \seq{w} / \seq{z} ]P_1 \PAR P_2 \\
    \rinexp{x}{\seq{y}}{\seq{z}}P_1 \PAR  \outexp{x}{\seq{v}}{\seq{w}}P_2 &\red \rinexp{x}{\seq{y}}{\seq{z}}P_1 \PAR [\seq{i}/ \seq{y}, \seq{w} / \seq{z}]P_1 \PAR P_2
\end{align*}
provided that \( \seq{v} \) evaluates to \( \seq{i} \).
\ifaplas{
  The full definition is given in
  the extended version~\cite{fullversion}}{The full definition is given in Appendix~\ref{sec:operational_semantics}}.
We say that a process \(P\) is \emph{terminating} if there is no infinite
reduction sequence \(P\red P_1\red P_2\red \cdots\).

In the rest of the paper, we consider only well-typed processes.
We write \(\ty\) for the type of integers.
The set of channel types, ranged over by \(\chty\), is given by:
\begin{align*}
  \chty ::= \Chty{\reg}{\tilde{\ty}}{\tilde{\chty}}
\end{align*}
The type $\Chty{\reg}{\tilde{\ty}}{\tilde{\chty}}$
describes channels used for transmitting a tuple \((\seq{v};\seq{w})\)
of integers \(\seq{v}\) and channels \(\seq{w}\) of types \(\seq{\chty}\).
Below we will just write \( \seq{\ty} \) for \( \seq{\ty}; \epsilon \) and \( \seq{\chty} \) for \( \epsilon ;\seq{\chty}\).
The subscript \(\reg\), called a \emph{region}, is a symbol that
abstracts channels; it is used in the translation to sequential programs.
For example, \(\Chty{\reg_1}{\ty}{\sChty{\reg_2}{\ty}}\) is the type of channels
that belong to the region \(\reg_1\) and are
used for transmitting a pair \((i,r)\) where
\(r\) is a channel of region \(\reg_2\) used for transmitting integers.
We use a meta-variable \(\mty\) for an integer or channel type.

Type judgments for processes and simple expressions are of the form \(\env;\chenv\p P\)
and \(\env;\chenv\p v:\mty\), where \(\env\) and \(\chenv\)
are sequences of bindings of the form
\(x\COL\ty\) and \(x\COL\chty\), respectively.
The typing rules are shown in Figure~\ref{fig:simple_type_system}.
Here \( \env; \chenv \vdash \seq{v} \COL \seq{\mty} \) means \( \env; \chenv \p v_i : \mty_i \) holds for each \( i \in \{ 1, \ldots, \len{\seq{v}} \} \).
We omit the explanation of the typing rules as they are standard.
\begin{figure}[tb]
    \centering
    \small
    \begin{minipage}{0.3\linewidth}
        \centering
        \begin{prooftree}
            \AxiomC{}
            \UnaryInfC{$\env; \chenv \vdash \zeroexp$}
        \end{prooftree}
    \end{minipage}
    \begin{minipage}{0.65\linewidth}
        \centering
        \begin{prooftree}
            \AxiomC{$\env; \chenv \vdash v \COL \ty$}
            \AxiomC{$\env; \chenv \vdash P_1$}
            \AxiomC{$\env; \chenv \vdash P_2$}
            \TrinaryInfC{$\env; \chenv \vdash \ifexp{v}{P_1}{P_2}$}
        \end{prooftree}
    \end{minipage}
    \\\vspace*{1ex}
    \begin{minipage}{0.38\linewidth}
        \centering
        \begin{prooftree}
            \AxiomC{$\env; \chenv \vdash P_1$}
            \AxiomC{$\env; \chenv \vdash P_2$}
            \BinaryInfC{$\env; \chenv \vdash P_1 \PAR P_2$}
        \end{prooftree}
    \end{minipage}
    \begin{minipage}{0.25\linewidth}
        \centering
        \begin{prooftree}
            \AxiomC{$\env; \chenv, x\COL\chty \vdash P$}
            \UnaryInfC{$\env; \chenv \vdash \nuexp{x \COL \chty}P$}
        \end{prooftree}
    \end{minipage}
    \begin{minipage}{.3\linewidth}
        \begin{prooftree}
            \AxiomC{$\env, \seq{x}\COL \seq{\ty}; \chenv \vdash P$}
            \UnaryInfC{$\env; \chenv \vdash \ndlet{x}{P}$}
        \end{prooftree}
    \end{minipage}
    \\\vspace*{1ex}
    \begin{minipage}{\linewidth}
        \centering
        \begin{prooftree}
            \AxiomC{$\env; \chenv \vdash x\COL \Chty{\reg}{\seq{\ty}}{\seq{\chty}}$}
            \AxiomC{$\env, \seq{y}\COL \seq{\ty}; \chenv, \tilde{z}\COL \tilde{\chty} \vdash P$}
            \BinaryInfC{$\env; \chenv \vdash \inexp{x}{\seq{y}}{\seq{z}}P $}
        \end{prooftree}
    \end{minipage}
    \\\vspace*{1ex}
    \begin{minipage}{\linewidth}
        \centering
        \begin{prooftree}
            \AxiomC{$\env; \chenv \vdash x\COL \Chty{\reg}{\seq{\ty}}{\seq{\chty}}$}
            \AxiomC{$\env; \chenv \vdash \seq{v}\COL \seq{\ty}$}
            \AxiomC{$\env; \chenv \vdash \seq{w}\COL \seq{\chty}$}
            \AxiomC{$\env; \chenv \vdash P$}
            \QuaternaryInfC{$\env; \chenv \vdash \outexp{x}{\seq{v}}{\seq{w}}P$}
        \end{prooftree}
    \end{minipage}
    \\\vspace*{1ex}
    \begin{minipage}{\linewidth}
        \begin{prooftree}
            \AxiomC{$\env; \chenv \vdash x\COL \Chty{\reg}{\seq{\ty}}{\seq{\chty}}$}
            \AxiomC{$\env, \seq{y}\COL \seq{\ty}; \chenv, \seq{z}\COL \seq{\chty} \vdash P$}
            \BinaryInfC{$\env; \chenv \vdash \rinexp{x}{\seq{y}}{\seq{z}}P$}
        \end{prooftree}
    \end{minipage}
    \\\vspace*{1ex}
    \begin{minipage}{0.2\linewidth}
        \centering
        \begin{prooftree}
            \AxiomC{$x\COL\ty \in \env$}
            \UnaryInfC{$\env; \chenv \vdash x\COL \ty$}
        \end{prooftree}
    \end{minipage}
    \begin{minipage}{0.2\linewidth}
        \centering
        \begin{prooftree}
            \AxiomC{$x\COL\chty \in \chenv$}
            \UnaryInfC{$\env; \chenv \vdash x\COL \chty$}
        \end{prooftree}
    \end{minipage}
    \begin{minipage}{0.18\linewidth}
        \centering
        \begin{prooftree}
            \AxiomC{}
            \UnaryInfC{$\env; \chenv \vdash i\COL \ty$}
        \end{prooftree}
    \end{minipage}
    \begin{minipage}{0.25\linewidth}
        \centering
        \begin{prooftree}
            \AxiomC{$\env; \chenv \vdash \tilde{v}\COL \seq{\ty}$}
            \UnaryInfC{$\env; \chenv \vdash \op(\seq{v})\COL \ty$}
        \end{prooftree}
    \end{minipage}
    \normalsize
    \caption{The typing rules of the simple type system for the $\pi$-calculus}
    \label{fig:simple_type_system}
\end{figure}

\subsection{Sequential Language}
We define the target language of our translation, which
is a first-order functional language with non-determinism.

A \emph{program} is a pair \((\Def, \Exp)\) consisting of (a set of)
\emph{function definitions} \(\Def\) and
an \emph{expression} \(\Exp\), defined by:
\begin{align*}
    \mathcal{D} \ \text{(function definitions)}        \, ::=& \ \{ \fdef{f_1}{\tilde{x}_1}{E_1}, \ldots, \fdef{f_n}{\tilde{x}_n}{E_n} \} \\
    E           \ \text{(expression)}                  \, ::=& \ \skipexp
    \mid \ndlet{x}{E} \mid f(\tilde{v}) \mid E_1 \nondet E_2 \\
                                                             & \ \mid \textbf{if } v \textbf{ then } E_1 \textbf{ else } E_2 \mid \textbf{Assume}(v); E \\
    v           \ \text{(simple expressions)}                       \, ::=& \ x \mid i \mid \op(\tilde{v})
\end{align*}
In a function definition
\( \fdef{f_i}{x_1, \ldots , x_{k_i}}{\Exp_i} \),
the variables \( x_1, \ldots, x_{k_i} \)
are bound in \( \Exp_i \); we identify function definitions up to renaming of bound
variables, and implicitly apply \(\alpha\)-conversions.
The function names \( f_1, \ldots, f_n \) need not be distinct from each other.
  If there are more than one definition for \( f \), then one of the definitions
  will be non-deterministically used when \( f \) is called.
We explain the informal meanings of the nonstandard expressions.
The expression \( \ndlet{x}{E} \)
instantiates \(\seq{x}\) to some integers in a non-deterministic manner.
The expression \(E_1\nondet E_2\) non-deterministically evaluates to
\(E_1\) or \(E_2\).
The expression
\(\textbf{Assume}(v); E \) evaluates to \(E\) if \(v\) is non-zero;
otherwise the whole program is aborted.
The other expressions are standard and their meanings should be clear.

We write \((\Def,\Exp)\sred (\Def,\Exp')\)
for the one-step reduction relation, whose definition is given
in \ifaplas{the extended version~\cite{fullversion}}{Appendix~\ref{sec:operational_semantics}}.
We say that a program is terminating if there is no infinite
reduction sequence.

\section{Basic Transformation}  \label{sec:approach}

This section presents our transformation from
a  \(\pi\)-calculus process to a sequential program,
so that
if the transformed program is terminating then the original process is terminating.

As explained in Section~\ref{sec:introduction}, the idea is to transform
an infinite chain of message passing on replicated input channels to
an infinite chain of recursive function calls.
Table~\ref{tab:trans} summarizes the correspondence between
processes and  sequential programs.
As shown in the table, a replicated input process is transformed to
a function definition, whereas a non-replicated input process is
just ignored, and integer bound variables are non-deterministically instantiated.
Note that channel arguments \(\seq{z}\) are ignored in both cases.
Instead, we prepare a global function name \(\Fname{\reg}\) for each
region \(\reg\); \(\regof{x}\) in the table indicates the region assigned to
the channel type of \(x\).\footnote{Thus, 
  the simple type system with
  ``regions'' introduced in the previous section is used here as a simple
  may-alias analysis.
  If \(x\) and \(y\) may be bound to the same channel during reductions,
  the type system assigns the same region to \(x\) and \(y\),
  hence \(x\) and \(y\) are
  mapped to the same function name \(\Fname{\regof{x}}\) by our transformation.}

\begin{table}[tbp]
  \caption{Correspondence between processes and sequential programs}
  \label{tab:trans}
\begin{tabular}{|l|l|}
  \hline
  processes & sequential programs \\
  \hline
  \hline
  replicated input (\(\rinexp{x}{\seq{y}}{\seq{z}}\cdots\)) & function definition  \(\Fname{\regof{x}}(\seq{y})=\cdots\)\\
  \hline
  non-replicated input (\(\inexp{x}{\seq{y}}{\seq{z}}\cdots\)) &
  non-deterministic instantiation (\(\ndlet{y}\cdots\))\\
  \hline
  output (\(\outexp{x}{\seq{v}}{\seq{w}}\cdots\)) &
    function call (\(\Fname{\regof{x}}(\seq{v})\nondet \cdots\))\\
    \hline
    parallel composition (\(\cdots\PAR\cdots\)) &
    non-deterministic choice (\(\cdots\nondet \cdots\))\\
    \hline
\end{tabular}
\end{table}

We define the transformation relation
$\env; \chenv \vdash P \Rightarrow \prog{\Def}{\Exp}$,
which means that the \(\pi\)-calculus process \(P\) well-typed
under \(\env;\chenv\) is transformed to the sequential program \((\Def,\Exp)\).
The relation is defined by the rules in Figure~\ref{fig:program_transformation}.
\begin{figure}[tb]
  \typicallabel{SX-RIn}
  \infrule[SX-Nil]{}
          {\env; \chenv \vdash \textbf{0} \Rightarrow
            \prog{\set{ \fdef{\Fname{\reg}}{\seq{y}}{\skipexp}
              \mid x\COL\Chty{\reg}{\seq{\ty}}{\seq{\chty}} \in \chenv, 
                   \len{\seq{y}} = \len{\seq{\ty}} }}{\skipexp}}
          \vspace*{1ex}
          \infrule[SX-In]
            {\env; \chenv \vdash x : \Chty{\reg}{\seq{\ty}}{\seq{\chty}}\andalso
            \env, \seq{y} : \seq{\ty}; \chenv, \seq{z} : \seq{\chty} \vdash P \Rightarrow \prog{\Def}{\Exp}}
            {\env; \chenv \vdash \inexp{x}{\seq{y}}{\seq{z}}P \Rightarrow
              \prog{\ndlet{y}{\Def}}{\ndlet{y}{\Exp}}}
          \vspace*{1ex}
\infrule[SX-RIn]
{\env; \chenv \vdash x : \Chty{\reg}{\seq{\ty}}{\seq{\chty}}\andalso
  \env, \seq{y} : \seq{\ty}; \chenv, \seq{z} : \seq{\chty} \vdash P \Rightarrow
  \prog{\Def}{\Exp}}
{\env; \chenv \vdash \rinexp{x}{\seq{y}}{\seq{z}}P \Rightarrow
  \prog{\set{ \fdef{\Fname{\reg}}{\seq{y}}{\Exp} } \mrg (\ndlet{y}{\Def})} {\skipexp}}
\vspace*{1ex}
\infrule[SX-Out]
{\env; \chenv \vdash x : \Chty{\reg}{\seq{\ty}}{ \seq{\chty}}\andalso
\env; \chenv \vdash \seq{v} : \seq{\ty}\andalso
\env; \chenv \vdash \seq{w} : \seq{\chty}\andalso
\env; \chenv \vdash P \Rightarrow \prog{\Def}{\Exp}}
{\env; \chenv \vdash \outexp{x}{\seq{v}}{\seq{w}}P \Rightarrow
  \prog{\mathcal{D}}{\Fname{\reg}(\seq{v}) \nondet \Exp}}

\infrule[SX-Par]
{\env; \chenv \vdash P_1 \Rightarrow \prog{\Def_1}{\Exp_1}\andalso
\env; \chenv \vdash P_2 \Rightarrow \prog{\Def_2}{\Exp_2}}
{\env; \chenv \vdash P_1 \mid P_2 \Rightarrow \prog{\Def_1 \mrg \Def_2}{\Exp_1 \nondet \Exp_2}}

\vspace*{1ex}
\infrule[SX-Nu]
        {\env; \chenv, x : \chty \vdash P \Rightarrow \prog{\Def}{\Exp}}
        {\env; \chenv \vdash \nuexp{x \COL \chty}P \Rightarrow \prog{\Def}{\Exp}}
          \vspace*{1ex}

\infrule[SX-If]
{\env; \chenv \vdash v : \ty\andalso
 \env; \chenv \vdash P_1 \Rightarrow \prog{\Def_1}{\Exp_1}\andalso
 \env; \chenv \vdash P_2 \Rightarrow \prog{\Def_2}{\Exp_2}}
{\env; \chenv \vdash \ifexp{v}{P_1}{P_2} \Rightarrow
  \prog{\Def_1 \mrg \Def_2}{\ifexp{v}{\Exp_1}{\Exp_2}}}

          \vspace*{1ex}

\infrule[SX-LetND]
{\env, \seq{x}: \seq{\ty}; \chenv \vdash P \Rightarrow \prog{\Def}{\Exp}}
{\env; \chenv \vdash \ndlet{x}{P} \Rightarrow
  \prog{\ndlet{x}{\Def}}{\ndlet{x}{\Exp}}}

    \begin{align*}
        \ndlet{x}{\Def} \coloneqq
& \{ \fdef{f}{\tilde{y}}{(\ndlet{x}{\Exp})} \mid \fdef{f}{\tilde{y}}{E} \in \Def \} 
    \end{align*}
    \normalsize
    \caption{The rules of simple type-based program transformation}
    \label{fig:program_transformation}
\end{figure}

We explain some key rules.
In \rn{SX-Nil}, \(\zeroexp\) is translated to \((\Def,\skipexp)\),
where \(\Def\) is the set of trivial function definitions.
In \rn{SX-In}, a (non-replicated) input is just removed, 
and the bound variables are instantiated to non-deterministic integers;
this is because we have no information about \(\seq{y}\); this will be refined
in Section~\ref{sec:refinement}. In contrast,
in \rn{SX-RIn}, a replicated input is converted to a function definition.
Since \(\Def\) generated from \(P\) may contain \(\seq{y}\), they are
 bound to non-deterministic integers and merged with the new definition for \(\Fname{\reg}\).
In \rn{SX-Out}, an output is replaced by a function call.
In \rn{SX-Par}, parallel composition is replaced by non-deterministic choice.

\begin{example}
\label{ex:fib}
\newcommand{\FIB}{P_\text{fib}}
    Let us revisit the Fibonacci example used in the introduction to explain the actual translation.
    Using the syntax we introduced, the Fibonacci process \( \FIB \) can now be defined as:
    \begin{align*}
        & \nuexp{\fib\COL \Chty{\reg_1}{\ty}{\sChty{\reg_2}{\ty}}}\rinexp{\fib}{n}{r} \\
        & \quad \ifexp{n<2}{\soutatom{r}{1}}{(\nu r_1\COL \sChty{\reg_2}{\ty})(\nu r_2\COL \sChty{\reg_2}{\ty})\\
        & \quad (\outatom{\fib}{n-1}{r_1} \PAR \outatom{\fib}{n-2}{r_2} \PAR \sinexp{r_1}{x} \sinexp{r_2}{y} \soutatom{r}{x+y})} \\
        & \PAR \letexp{m}{\ndint}{(\nu r\COL \sChty{\reg_2}{\ty}) \outatom{\fib}{m}{r}}
    \end{align*}
    Note that \( \nuexp{\fib} \) and \( \ndletatom{m} \) have been added to close the process.
    We can derive $\emptyset; \emptyset \vdash \FIB \Rightarrow \prog{\Def}{\Exp}$,
    where $\Def$ and $\Exp$ are given as follows:\footnotemark
    \footnotetext{The program written here has been simplified for the sake of readability.
    For instance, we removed some redundant \( \skipexp \), trivial function definitions, and unused non-deterministic integers.
    The other examples that will appear in this paper are also simplified in the same way.
    }
\begin{align*}
        \Def &=  \{ \fdef{\Fname{\reg_1}}{z}{\ifexp{z<2}{\Fname{\reg_2}(1)}{( \Fname{\reg_1}(z-1) \nondet \Fname{\reg_1}(z-2) \\ 
               &\qquad\qquad\qquad \nondet \letexp{x}{\ndint}{\letexp{y}{\ndint}{\Fname{\reg_2}(x+y)}} )}}, \\
               &\ \quad  \fdef{\Fname{\reg_2}}{z}{\skipexp} \} \\ 
        \Exp &= \letexp{m}{\ndint}{\Fname{\reg_1}(m)}
    \end{align*}
    Here \( \Fname{\reg_1} \) is the ``Fibonacci function'' because \( \reg_1 \) is the region assigned to the channel \( \cname{fib}\) in \( \FIB \).
    The function call \( \Fname{\reg_2} (x + y)\) corresponds to the output \( \soutatom{r}{x + y} \); the argument of the function call is actually a nondeterministic integer because \( \sinatom{r}{x} \) and \( \sinatom{r}{y} \) are translated to non-deterministic instantiations.
    Since the program $(\Def, \Exp)$ is terminating,
    we can verify that $\FIB$ is also terminating.
    \qed
\end{example}

\begin{example}
\label{ex:nested_rep}
To help readers understand the rule \rn{SX-RIn}, we consider the following process, which contains a nested input:
\begin{align*}
 \rinexp{f}{x}{r}\srinexp{g}{y, z}(\ifexp{y \leq 0}{\soutatom{r}{z}}{\soutatom{g}{y - 1, x + z}} ) \PAR \outexp{f}{2}{r}\soutatom{g}{3,0}
\end{align*}
where \( f \COL \Chty{\reg_1}{\ty}{\sChty{\reg_2}{\ty}} \) and \( g \COL \sChty{\reg_3}{\ty, \ty} \).
This process computes \( x * y + z \) (which is \( 6 \) in this case) and returns that value using \( r \).
This program is translated to:
\begin{align*}
  &\fdef{\Fname{\reg_1}}{x}{\skipexp}\qquad \fdef{\Fname{\reg_2}}{z}{\skipexp} \\
  &\fdef{\Fname{\reg_3}}{y, z}{\ndlet*{x}{\ifexp{y \leq 0}{\Fname{\reg_2}(z)}{} } \Fname{\reg_3}(y - 1, x + z)}
\end{align*}
with the main expression  \( \Fname{\reg_1}(2) \nondet \Fname{\reg_3}(3, 0) \).
Note that the body of \( \Fname{\reg_1}\), which is the function corresponding to \( f \), is \( \skipexp \).
This is because when the rule \rn{SX-RIN} is applied to 
\( \srinatom{g}{y, z}\ldots\),
the main expression of the translated program becomes \( \skipexp \).
Observe that the function definition for \( \Fname{\reg_3} \) still contains a free variable \( x \) at this moment.
Then \( \Fname{\reg_3} \) is closed by \( \ndletatom{x} \) when we apply the rule \rn{SX-RIn} to \( \srinatom{f}{x; r}\ldots \).
We can check that the above program is terminating, and thus we can verify that the original process is terminating.
Note that some precision is lost in the application of \rn{SX-RIn} above
since we cannot track the relation between the argument of \( \Fname{\reg_1} \)
and the value of \( x \) used inside \( \Fname{\reg_3}\). This loss causes a problem if, for example, the condition \(y\le 0\) in
  the process above is replaced with \(y\le x\). The body of
  \(\Fname{\reg_3}\) would then become
  \(\ndlet*{x}{\textbf{if } y\le x\ \cdots}\), hence the sequential program would be
    non-terminating.
 \qed
\end{example}

\begin{remark}
A reader may wonder why
a non-replicated input is removed in \rn{SX-In},
rather than translated to a function definition as done for a replicated input.
It is actually possible to obtain a sound transformation even if we treat
non-replicated inputs in the same manner as replicated inputs,
but we expect that our approach of removing non-replicated inputs often works
better.
For example,
  consider $\sinexp{x}{y}\soutatom{x}{y} \PAR \soutatom{x}{0}$.
  Our translation generates
  $\prog{\{\fdef{\Fname{\reg_x}}{z}{\skipexp}\}}{(\ndlet*{y}{\Fname{\reg_x}(y)}) \nondet \Fname{\reg_x}(0)}$
  which is terminating,
  whereas if we treat the input in the same way as a replicated input,
  we would obtain
  $\prog{\{\fdef{\Fname{\reg_x}}{z}{\Fname{\reg_x}(z)}\}}{\Fname{\reg_x}(0)}$
  which is not terminating.
 Our approach also has some defect.
  For example, consider
  $\soutatom{x}{0} \PAR \sinexp{x}{y} \ifexp{y=0}{\zeroexp}{\Omega}$
  where $\Omega$ is a diverging process. 
  Our translation yields
  $\prog{\{\fdef{\Fname{\reg_x}}{z}{\skipexp}\}}{\Fname{\reg_x}(0) \nondet \ndlet*{y}{\ifexp{y=0}{\skipexp}{\Omega'}}}$
  which is non-terminating.
  On the other hand, if we treat the input like a replicated input,
  we would obtain
  $\prog{\{\fdef{\Fname{\reg_x}}{z}{\ifexp{z=0}{\skipexp}{\Omega'}}\}}{\Fname{\reg_x}(0)}$
  which is terminating.
  This issue can, however,
  be mitigated by the extension with refinement types in Section~\ref{sec:refinement}.
Our choice of removing non-replicated inputs is also 
consistent with Deng and Sangiorgi's type system~\cite{Deng06IC}, which
prevents an infinite chain of communications on replicated input channels by using types
and ignores non-replicated inputs.
\qed
\end{remark}

The following theorem states the soundness of our transformation.
\begin{theorem}[soundness]  \label{thm:soundness}
  Suppose  $\emptyset; \emptyset \vdash P \Rightarrow \prog{\Def}{\Exp}$.
    If $(\Def, \Exp)$ is terminating, then so is $P$.
\end{theorem}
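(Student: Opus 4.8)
The plan is to prove the contrapositive: from an infinite reduction sequence $P = P_0 \red P_1 \red P_2 \red \cdots$ I would construct an infinite reduction sequence of $(\Def,\Exp)$. Since a target reduction never modifies the definition part (it has the form $(\Def,E) \sred (\Def,E')$), it suffices to exhibit an infinite sequence of expression reductions under the fixed $\Def$.

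\emph{Step 1: reduce to replication-heavy sequences.} First I would show that any infinite reduction sequence of a well-typed process performs infinitely many communications on \emph{replicated}-input channels. The remaining reductions --- communications on non-replicated inputs, conditionals, non-deterministic lets, and reductions that only rearrange the process by structural congruence --- form a terminating sub-relation: with the lexicographic measure that counts first the number of output prefixes occurring in the process and then the number of $\mathbf{if}$- and $\mathbf{let}$-constructs, each such step strictly decreases the measure, substitution of simple expressions for integer variables and of channel names for channel variables leaves it unchanged, and structural congruence preserves it. A replicated-input communication only re-instantiates a fixed subterm, so between two consecutive replicated communications only finitely many steps occur; were there only finitely many replicated communications, the tail of the sequence would be an infinite descending chain in $\mathbb{N} \times \mathbb{N}$, a contradiction.

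\emph{Step 2: extract a causal chain and simulate it.} Because a sequential target run follows a single thread of control and discards one branch at every $E_1 \nondet E_2$, I cannot relate $P_i$ to ``the'' transform of $P_i$; instead I would extract from the infinite reduction an infinite \emph{causal chain} of replicated-input communications --- occurring at positions $j_1 < j_2 < \cdots$, with the output consumed at step $j_{k+1}$ produced by a bounded run of the subprocess spawned by the firing at step $j_k$ --- using Step~1 together with a König-style argument on the causal structure of replicated firings, an argument that must also cope with nested replicated inputs as in Example~\ref{ex:nested_rep}. I would then set up a simulation relation $\mathcal{R}$ between a reachable process (together with the record of the integer instantiations performed so far) and a target expression such that: (i) $P$ is $\mathcal{R}$-related to $\Exp$; (ii) each of the finitely many reductions between positions $j_k$ and $j_{k+1}$ is matched by zero or more target steps preserving $\mathcal{R}$; (iii) the replicated communication at position $j_{k+1}$, on a channel of region $\reg$, is matched by at least one target step --- a descent through $\nondet$-branches to the call site corresponding to the consumed output, followed by a call of $\Fname{\reg}$ --- again preserving $\mathcal{R}$. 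Threading this along $j_1 < j_2 < \cdots$ yields the desired infinite target run. Setting up $\mathcal{R}$ also needs a substitution lemma (the transform commutes with substitution of simple expressions for integer variables, while channel substitution only affects which function name $\Fname{\reg}$ is used, with regions preserved by typing) and the observation that the definition set obtained by re-transforming a reachable $Q$ is subsumed by the fixed $\Def$: it may hold extra, partially instantiated, duplicate definitions for a region's function name (created by the body-closing in \rn{SX-RIn} under nested replication), but every call it can resolve, $\Def$ can resolve too by an appropriate non-deterministic choice of definition.

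\emph{Main obstacle.} The technically hard part is Step~2: extracting an infinite causal chain of replicated communications and formulating $\mathcal{R}$ so that it tracks exactly which parallel component of the current process carries the chain, that off-chain reductions (including off-chain replicated communications) leave that component and hence $\mathcal{R}$ intact, and that the target can always descend to the function-call site matching the next chain step. The interaction with non-replicated communications --- which blur the causal order --- and with nested replicated inputs --- which make the would-be transform's definition set grow --- is where most of the care is needed; the remaining verification is a routine induction on the transformation rules.
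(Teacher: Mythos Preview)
Your plan is plausible but takes a substantially more intricate route than the paper's. The paper sidesteps your ``main obstacle'' entirely by inserting an intermediate \emph{non-standard} reduction relation $\nsred$ on sequential programs that reduces under $\nondet$ without discarding either branch (together with a structural congruence making $\nondet$ commutative and associative with unit $\skipexp$). With this device the simulation is a plain step-by-step induction on $P\to P'$: if $P$ translates to $(\Def,\Exp)$ then $P'$ translates to some $(\Def',\Exp')$ with $\Def \subdef \Def'$ and $(\Def',\Exp)\nsred^+(\Def',\Exp')$, proved with only a substitution lemma and a congruence lemma. No causal-chain extraction on the process side is needed, and the difficulty you flag about off-chain communications ``blurring'' causality never arises, because every parallel component is kept alive in the target. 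The K\"onig argument is then applied on the \emph{sequential} side rather than on the causal structure of the process run: an infinite $\nsred$-sequence is annotated by positions $\gamma\in\{1,2\}^*$ recording which $\nondet$-branch each step touches, and K\"onig's lemma on the tree of occurring positions yields an infinite prefix-chain, from which an infinite standard $\sred$-run is read off by projection onto those positions.

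Your Step~1 and your remark about the re-transformed definition set being subsumed by the original $\Def$ (this is exactly the paper's preorder $\subdef$) are both correct; the paper relies on the latter but never needs the former. What your route would buy is a more operational explanation of why the translation reflects termination, tracing one causal thread; what the paper's route buys is a proof that is almost entirely syntactic and avoids having to formulate and maintain the tracking relation~$\mathcal{R}$ at all.
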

We briefly explain the proof strategy; see \ifaplas{the extended version~\cite{fullversion}}{Appendix~\ref{sec:soundness}}
for the actual proof.
Basically, our idea is to show that the translated program simulates the original process.
Then we can conclude that if the original process is non-terminating then so is the sequential program.
However, there is a slight mismatch between the reduction of a process and that of the sequential program that we need to overcome.
Recall that \( \srinexp{f}{x}P \PAR \soutatom{f}{1} \PAR \soutatom{f}{2} \) is translated to \( \Fname{\reg_f}(1) \nondet \Fname{\reg_f}(2) \) with a function definition for \( \Fname{\reg_f} \).
In the sequential program, we need to make a ``choice'', e.g.~if \( \Fname{\reg_f}(1) \) is called, we cannot call \( \Fname{\reg_f}(2) \) anymore.
On the other hand, the output \( \soutatom{f}{2} \) can be used even if \( \soutatom{f}{1} \) has been used before.
To fill this gap, we introduce a non-standard reduction relation, which does not discard branches of non-deterministic choices and show the simulation relation using that non-standard semantics.
Then we show that if there is an infinite non-standard reduction sequence, then there is an infinite subsequence that corresponds to a reduction along a certain choice of non-deterministic branches.
This step is essentially a corollary of the K\"onig's Lemma.
This is because the infinite non-standard reduction sequence can be reformulated as an infinite tree in which branches correspond to non-deterministic choices \( \oplus \) (thus the tree is finitely branching) and paths correspond to reduction sequences.

The following example indicates that the basic transformation is
sometimes too conservative.
\begin{example}
  \label{ex:weakeness-of-basic-transformation}
Let us consider the following process \( \DEC \):
\begin{align*}
    &\rinexp{\pre}{n}{r} \soutatom{r}{n-1} \\
    & \PAR \rinexp{f}{n}{r} \ifexp{n<0}{ \soutatom{r}{1} }{ \nuexp{s \COL \sChty{\reg_2}{\ty} } (\outatom{\pre}{n}{s} \PAR \sinexp{s}{x}\outatom{f}{x}{r}) } \\
    &\PAR \outatom{f}{m}{r}
\end{align*}
where
\( \pre\COL \Chty{\reg_1}{\ty}{\sChty{\reg_2}{\ty}} \), \(\cname{f} \COL \Chty{\reg_3}{\ty}{\sChty{\reg_4}{\ty}} \) and \( r \COL \sChty{\reg_4}{\ty}\).
This process, which also appeared in the introduction, keeps on decrementing the integer \( m \) until it gets negative and then returns \( 1 \) via \( r \).
We can turn this process into a closed process \( \DECclosed \) by restricting the names \( \pre \), \( \cname{f} \), \( \cname{r} \) and adding \( \ndletatom{m} \) in front of the process.
Note that \( \DECclosed \) is terminating.

The process \( \DECclosed \) is translated to:
\begin{align*}
  &\fdef{\Fname{\reg_1}}{n}{\Fname{\reg_2}(n-1)}, 
  \qquad \fdef{\Fname{\reg_2}}{x}{\skipexp},
  \qquad \fdef{\Fname{\reg_4}}{x}{\skipexp}, \\
  & \fdef{\Fname{\reg_3}}{n}{\ifexp{n<0}{\Fname{\reg_4}(1) \\ &\qquad\qquad }
            {(\Fname{\reg_1}(n) \nondet \letexp{x}{\ndint}{\Fname{\reg_3}(x)}})}
\end{align*}
with the main expression \( \ndlet*{m}{\Fname{\reg_3}(m)} \).
Observe that the function \( f_{\reg_3}\) is applied to a non-deterministic integer, not \( n - 1\).
Thus, this program is not terminating, meaning that we fail to verify that the original process is terminating.
This is due to the shortcoming of our transformation that all the integer   values received by non-replicated inputs are replaced by non-deterministic integers.
This problem is addressed in the next section.
\qed
\end{example}

\section{Improving Transformation Using Refinement Types}  \label{sec:refinement}

In this section, we refine the basic transformation in the previous section
by using a refinement type system.

Recall that in Example~\ref{ex:weakeness-of-basic-transformation},
the problem was that information about values received by non-replicated
inputs was completely lost.
By using a refinement type system for the \(\pi\)-calculus,
we can statically infer that \(x<n\) holds between \(x\) and \(n\) in
the process in Example~\ref{ex:weakeness-of-basic-transformation}.
Using that information, we can transform the process in Example~\ref{ex:weakeness-of-basic-transformation} and obtain
\[
    \fdef{\Fname{\reg_3}}{n}{\ifexp{n<0}{\cdots}
            {(\Fname{\reg_1}(n) \nondet \letexp{x}{\ndint}{\assexp{x < n}\Fname{\reg_3}(x)}})}
\]
for the definition of \( \Fname{\reg_3}\).
This is sufficient to conclude that the resulting program is terminating.

In the rest of this section, we first introduce a refinement type system
in Section~\ref{sec:rtype} and explain the refined transformation in Section~\ref{sec:rx}.
We then discuss how to automatically infer refinement types and
achieve the refined transformation in Section~\ref{sec:inference}.

\subsection{Refinement Type System}
\label{sec:rtype}

The set of \emph{refinement channel types}, ranged over by $\chty$, is given by:
\begin{align*}
    \chty ::= \rch{\reg}{\seq{x}}{\pred}{\seq{\chty}}
\end{align*}
Here, \(\pred\) is a formula of integer arithmetic.
We sometimes write just
\(\rchepsilon{\reg}{\seq{x}}{\pred}\) for
\(\rch{\reg}{\seq{x}}{\pred}{\epsilon}\).
Intuitively, 
$\rch{\reg}{\tilde{x}}{\pred}{\tilde{\chty}}$
describes channels that are used for transmitting
a tuple \((\seq{x};\seq{y})\) such that (i) \(\seq{x}\) are integers
that satisfy \(\pred\), and (ii) \(\seq{y}\) are channels of types \(\seq{\chty}\).
For example, the type
\(\rch{\reg_1}{x}{\TRUE}{\rchepsilon{\reg_2}{z}{z<x}}\)
describes channels used for transmitting a pair
\((x, y)\), where \(x\) may be any integer, and \(y\) must be a channel of
type \(\rchepsilon{\reg_2}{z}{z<x}\), i.e.,
a channel used for passing an integer \(z\) smaller than \(x\).%
Thus, if \(u\) has type \(\rch{\reg_1}{x}{\TRUE}{\rchepsilon{\reg_2}{z}{z<x}}\),
then the process
\(\inexp{u}{n}{r}{\soutatom{r}{n-1}}\) is allowed
but \(\inexp{u}{n}{r}{\soutatom{r}{n}}\) is not.

Type judgments for processes and expressions are
now of the form \( \env;\predenv;\chenv \p P \) and \( \env;\predenv;\chenv \p v\COL\mty \), where \( \predenv \) is a sequence of formulas.
Intuitively, \( \env;\predenv;\chenv \p P \) means that \( P \) is well-typed under the environments \( \env \) and \( \chenv \) assuming that all the formulas in \( \predenv \) holds.

The selected typing rules are shown in Figure~\ref{fig:refinement_type_system}.
The rules for the other constructs are identical to that of the simple type system; 
the complete list of typing rules appears in \ifaplas{the extended version~\cite{fullversion}}{Appendix~\ref{sec:refinement-apx}}. %
The rules shown in Figure~\ref{fig:refinement_type_system} are fairly standard rules for refinement type systems.
In \rn{RT-Out}, the notation
\(\predenv \vDash \pred\) means that \(\pred\) is a logical consequence of
\(\predenv\); for example, \(x<y, y<z \vDash x<z\) holds.
In the typing rules, we implicitly require that
all the type judgments are well-formed, in the sense that
all the integer variables occurring in a formula is
properly declared in \(\env\) or bound by a channel type constructor;
see \ifaplas{the extended version~\cite{fullversion}}{Appendix~\ref{sec:refinement-apx}}
for the well-formedness condition.
\begin{figure}[tb]
    \centering
    \small
    \begin{minipage}{\linewidth}
        \centering
        \begin{prooftree}
            \AxiomC{$\env;\predenv;\chenv\p x\COL\rch{\reg}{\seq{y}}{\pred}{\seq{\chty}}$}
            \AxiomC{$\env,\seq{y}\COL\seq{\ty}; \predenv,\pred; \chenv,\seq{z}\COL\seq{\chty} \p P$}
            \RightLabel{\textsc{(RT-In)}}
            \BinaryInfC{$\env;\predenv;\chenv\p \inexp{x}{\seq{y}}{\seq{z}}P$}
        \end{prooftree}
    \end{minipage}
    \\
    \begin{minipage}{0.93\linewidth}
        \infrule[RT-Out]
        {\env;\predenv;\chenv\p x\COL\rch{\reg}{\seq{y}}{\pred}{\seq{\chty}}\andalso
        \env;\predenv;\chenv\p \seq{v}\COL\seq{\ty}\andalso
        \predenv \vDash [\seq{v}/\seq{y}]\pred\\
        \env;\predenv;\chenv\p \seq{w}\COL[\seq{v}/\seq{y}]\seq{\chty}\andalso
        \env;\predenv;\chenv\p P}
        {\env;\predenv;\chenv \vdash \outexp{x}{\seq{v}}{\seq{w}}P}
    \end{minipage}
    \\
    \begin{minipage}{\linewidth}
        \centering
        \begin{prooftree}
            \AxiomC{$\env;\predenv;\chenv\p x\COL\rch{\reg}{\seq{y}}{\pred}{\seq{\chty}}$}
            \AxiomC{$\env,\seq{y}\COL\seq{\ty}; \predenv,\pred; \chenv,\seq{z}\COL\seq{\chty} \p P$}
            \RightLabel{\textsc{(RT-RIn)}}
            \BinaryInfC{$\env;\predenv;\chenv\p \rinexp{x}{\seq{y}}{\seq{z}}P$}
        \end{prooftree}
    \end{minipage}
    \\
    \begin{minipage}{\linewidth}
        \centering
        \begin{prooftree}
            \AxiomC{$\env; \predenv; \chenv \p v\COL\ty$}
            \AxiomC{$\env; \predenv, v \neq 0; \chenv \p P_1$}
            \AxiomC{$\env; \predenv, v =    0; \chenv \p P_2$}
            \RightLabel{\textsc{(RT-If)}}
            \TrinaryInfC{$\env;\predenv;\chenv\p \ifexp{v}{P_1}{P_2}$}
        \end{prooftree}
    \end{minipage}
    \\
    \begin{minipage}{0.45\linewidth}
        \centering
        \begin{prooftree}
            \AxiomC{$x \COL \chty \in \chenv$}
            \RightLabel{\textsc{(RT-Var-Ch)}}
            \UnaryInfC{$\env;\predenv;\chenv \p x\COL\chty$}
        \end{prooftree}
    \end{minipage}
    \normalsize
    \caption{\skchanged{Selected} typing rules of the refinement type system for the $\pi$-calculus}
    \label{fig:refinement_type_system}
\end{figure}

\subsection{Program Transformation}
\label{sec:rx}

Based on the refinement type system above,
we refine the transformation relation to
$\env; \predenv; \chenv \vdash P \Rightarrow \prog{\Def}{\Exp}$.
The only change is in the following rule for non-replicated inputs.\footnote{The rule for replicated inputs is also modified in a similar manner.}
\infrule[RX-In]
        {\env;\predenv;\chenv\p x\COL\rch{\reg}{\tilde{y}}{\pred}{\tilde{\chty}}
          \andalso
          \env,\seq{y}\COL\seq{\ty}; \predenv,\pred; \chenv,\seq{z}\COL\seq{\chty} \vdash P \Rightarrow \prog{\Def}{\Exp}}
        {\env;\predenv;\chenv \vdash \inexp{x}{\seq{y}}{\seq{z}}P
          \qquad\qquad\qquad\qquad\qquad\qquad\qquad\qquad         \\
          \Rightarrow
          \prog{\ndlet{y}{\textbf{Assume}(\pred);\Def}}{\ndlet{y}{\assexp{\pred}\Exp}}}
        Here, we insert \(\assatom{\pred}\), based on the refinement type
        of \(x\).
        The expression \(\ndlet{y}{\textbf{Assume}(\pred);\Exp}\) first
        instantiates \(\seq{y}\) to some integers in a non-deterministic manner,
        but proceeds to evaluate \(E\) only if the values of \(\seq{y}\) satisfy
        \(\pred\). Thus, the termination analysis for the target sequential program may assume that
       \(\seq{y}\) satisfies \(\pred\) in \(\Exp\).

\begin{example}  \label{ex:refined-transformation-for-f}
Let us explain how the process \( \DEC \) introduced in Example~\ref{ex:weakeness-of-basic-transformation} is translated by the refined translation.
Recall that the following simple types were assigned to the channels:
\begin{align*}
    \pre\COL \Chty{\reg_1}{\ty}{\sChty{\reg_2}{\ty}}, \quad
    \cname{f} \COL \Chty{\reg_3}{\ty}{\sChty{\reg_4}{\ty}}, \quad
    r \COL \sChty{\reg_4}{\ty}, \quad
    s \COL \sChty{\reg_2}{\ty}.
\end{align*}
By the refinement type system, the above types can be refined as:
\begin{align*}
    &\pre\COL \rch{\reg_1}{n}{\TRUE}{\sChty{\reg_2}{x; x < n}}, \quad
    \cname{f} \COL \rch{\reg_3}{n}{\TRUE}{\sChty{\reg_4}{x; \TRUE}}, \\
    &r \COL \sChty{\reg_4}{x; \TRUE}, \quad
    s \COL \sChty{\reg_2}{x; x < n}.
\end{align*}
For example, one can check that the output \( \soutatom{r}{n - 1} \) on the first line of \( \DEC \) is well-typed because \( \models [n - 1/x]x < n \) holds.
Note that this \( r \) is the variable bound by \( \inatom{\pre}{n}{r} \) and thus has the type \( \sChty{\reg_2}{x; x < n} \).

Therefore, by the rule \rn{RX-In}, the input \( \sinexp{s}{x}\outatom{f}{x}{r} \) is now translated as follows:
\infrule{\env;\predenv;\chenv\p s\COL\sChty{\reg_2}{x; x < n}
          \quad
          \env, x \COL \ty; \predenv,x < n; \chenv \vdash \outatom{f}{x}{r} \Rightarrow \prog{\Def}{\Fname{\reg_3}(x)}}
{\env;\predenv;\chenv \vdash \sinexp{s}{x}\outatom{f}{x}{r} \hspace{7cm}
          \qquad \\
          \Rightarrow
          \prog{(\ndlet*{x}{\textbf{Assume}(x < n);\Def})}{(\ndlet*{x}{\textbf{Assume}(x < n);\Fname{\reg_3}(x)})}}
with suitable \( \env \), \( \predenv \) and \( \chenv \).
By translating the whole process, we obtain
\begin{align*}
\Fname{\reg_3}(n) = \ &\textbf{if }{n<0} \textbf{ then } \Fname{\reg_4}(1) \\
    &\textbf{else }(\Fname{\reg_1}(n) \nondet \ndlet*{x}{\assexp{x < n}{\Fname{\reg_3}(x)}})
\end{align*}
as desired.
The other function definitions are given as in the case of Example~\ref{ex:weakeness-of-basic-transformation} (except for the fact that some redundant assertions \( \ndlet*{x}{\assatom{x < n}} \) are added).
\end{example}

\shchanged{
The soundness of the refined translation is obtained from the following argument.
We first extend the \(\pi\)-calculus with the $\ASSUME$ statement.
Then the refined translation can be decomposed into the following two steps:
(a) given a \(\pi\)-calculus process $P$,
insert $\ASSUME$ statements based on refinement types 
and obtain a process $P'$; and
(b) apply the translation of Section~\ref{sec:approach} to $P'$ (where $\ASSUME$ is just mapped to itself)
and obtain a sequential program $S$.
The soundness of step (b) follows by an easy modification of the proof\ifaplas{}{ in Appendix~\ref{sec:soundness}} 
for the basic transformation (just add the case for $\ASSUME$).
So, the termination of \(S\) would imply that of \(P'\).
Now, from the soundness of the refinement type system (which follows from a standard
argument on type preservation and progress), it follows that the $\ASSUME$ statements
inserted in step (a) always succeed. Thus,
the termination of \(P'\) would imply that of \(P\).
We can, therefore, conclude that if \(S\) is terminating, so is \(P\).
}
\newcommand{\constr}{\mathbb{C}}
\newcommand{\Inf}[4]{\fname{Inf}(#1; #2; #3; #4)}
\newcommand{\newty}[1]{\fname{NewTy}(#1)}
\newcommand{\newtys}[1]{\fname{NewTys}(#1)}
\newcommand{\subty}[3]{\fname{SubTy}(#1; #2; #3)}
\newcommand{\subtys}[3]{\fname{SubTys}(#1; #2; #3)}

\subsection{Type Inference}  \label{sec:inference}

This section discusses how to infer refinement types automatically
to automatically achieve the transformation.
As in refinement type inference for functional programs~\cite{Jhala08,Unno09PPDP,DBLP:journals/jar/ChampionCKS20},
we can reduce refinement type inference for the \(\pi\)-calculus to
the problem of CHC (Constrained Horn Clauses) solving~\cite{Bjorner15}.

We explain the procedure through an example.
Once again, we use the process \( \DEC \) introduced in Example~\ref{ex:weakeness-of-basic-transformation}.
We first perform type inference for the simple type system in Section~\ref{sec:targetlanguage}, and (as we have seen) obtain the following simple types for \( \pre \) and \( f \):
\begin{align*}
    \pre\COL \Chty{\reg_1}{\ty}{\sChty{\reg_2}{\ty}}, \quad
    \cname{f} \COL \Chty{\reg_3}{\ty}{\sChty{\reg_4}{\ty}}
\end{align*}
Here, we have omitted the types for other (bound) channels \(r,s,y\),
as they can be determined based on those of \( \pre \) and \( f \).
Based on the simple types, we prepare the following templates for refinement types.
\begin{align*}
  \pre\COL \rch{\reg_1}{n}{P_1(n)}{\sChty{\reg_2}{x; P_2(n,x)}}, \quad
  f\COL \rch{\reg_3}{n}{P_3(n)}{\sChty{\reg_4}{x; P_4(n,x)}}.
\end{align*}
Here, \(P_i\) (\(i\in\set{1,\ldots,4}\)) is a predicate variable that represents
unknown conditions.

Based on the refinement type system, we can generate the following constraints on
the predicate variables.
\[
\begin{array}{l}
  \forall n.(P_1(n) \imp P_2(n, n-1))\qquad
  \forall n.(P_3(n)\land n<0 \imp P_4(n, 1)) \\
  \forall n.(P_3(n)\land n\ge 0 \imp P_1(n-1))\\
  \forall n,x.(P_3(n)\land n\ge 0\land P_2(n-1,x) \imp P_3(x))\\
  \forall m.(\TRUE \imp P_3(m))
\end{array}
\]
Here, the first constraint comes from the first line of the process,
and the second constraint (the third and fourth constraints, resp.)
comes from the then-part (the else-part, resp.)
of the second line of the process. The last constraint
comes from \(\outatom{f}{m}{r}\).

The generated constraints are in general a set of \emph{Constrained Horn Clauses}
(CHCs)~\cite{Bjorner15} of the form
\(\forall \seq{x}.( P_1(\seq{v}_1)\land \cdots \land P_k(\seq{v}_k)\land \pred \imp H)\),
where \(P_1,\ldots,P_k\) are predicate variables, \(\pred\) is a formula
of integer arithmetic (without predicate variables),
and \(H\) is either of the form \(P(\seq{v})\) or \(\pred'\).
The problem of finding a solution (i.e. an assignment of predicates to
predicate variables) of a set of CHCs is undecidable in general,
but there are various automated tools (called CHC solvers)
for solving the problem~\cite{DBLP:journals/fmsd/KomuravelliGC16,DBLP:journals/jar/ChampionCKS20}.
Thus, by using such a CHC solver, we can solve the constraints on predicate variables,
and obtain refinement types by substituting the solution for the templates of
refinement types.

For the example above, the following is a solution.
\[
\begin{array}{l}
  P_1(n)\equiv \TRUE\qquad P_2(n,x)\equiv x < n\qquad 
P_3(x) \equiv \TRUE \qquad P_4(n,x)\equiv \TRUE.
\end{array}
\]
This is exactly the predicates we used in Example~\ref{ex:refined-transformation-for-f} to translate \( \DEC \) using the refined approach.

\subsubsection*{Adding extra CHCs.}
Actually, a further twist is necessary in the step of CHC solving.
As in the example above, all the CHCs generated based on the refinement typing rules
are of the form \(\cdots \imp P_i(\seq{v})\) (i.e., the head of every CHC is
an atomic formula on a predicate variable).
Thus, there always exists a trivial solution for the CHCs, which instantiates
all the predicate variables to \(\TRUE\).
For the example above,
\[
\begin{array}{l}
  P_1(n)\equiv \TRUE\qquad P_2(n,x)\equiv \TRUE\qquad 
P_3(n) \equiv \TRUE \qquad P_4(n,x)\equiv \TRUE
\end{array}
\]
is also a solution,
but using the trivial solution,
our transformation yields the non-terminating program.
This program is essentially the same as the one in Example~\ref{ex:weakeness-of-basic-transformation} since \( \ndlet*{x}{\assexp{\TRUE}}\Exp \) is equivalent to \( \ndlet*{x}\Exp \).
Typical CHC solvers indeed tend to find the trivial solution.

To remedy the problem above, in addition to the CHCs generated from the typing rules,
we add extra constraints that prevent infinite loops.
For the example above, the definition of \(\Fname{\reg_3} \) (which corresponds to the channel \( f \)) in the translated program is of the form
\[  \Fname{\reg_3}(n) = \ifexp{\!n<0\!}{\!\skipexp\!}{\Fname{\reg_1}(n) \nondet  (\letexp{x}{\ndint}\assexp{P_2(n,x)}\Fname{\reg_3}(x))}.
\]
Thus we add the clause:
\[ P_2(n,x) \imp n\ne x\]
to prevent an infinite loop \(\Fname{\reg_3}(m)\red \Fname{\reg_3}(m) \red \cdots\).
With the added clause, a CHC solver \hoice{}~\cite{DBLP:journals/jar/ChampionCKS20}
indeed returns \(n<x\) as the solution for \(P_2(n,x)\).

In general, we can add the extra CHCs in the following, counter-example-guided manner.
\begin{enumerate}
\item \(\mathcal{C} := \) the CHCs generated from the typing rules
\item \(\theta := \mathit{callCHCsolver}(\mathcal{C})\)
\item \(S := \) the sequential program generated based on the solution \(\theta\)
\item if \(S\) is terminating then return OK; otherwise, 
 analyze \( S \) to find an infinite reduction sequence, add an extra clause to \(\mathcal{C}\) to disable the infinite sequence, and go back to 2.
\end{enumerate}
\changed{
More precisely, in the last step, the backend termination analysis tool generates a lasso
as a certificate of non-termination. We extract 
a chain $f(\seq{x}) \to \dots \to f(\seq{x}')$ of recursive calls
from the lasso, and
add an extra clause requiring $\seq{x} \neq \seq{x}'$
to \(\mathcal{C}\). This is naive and insufficient
for excluding out an infinite sequence like \(f(1)\to f(2) \to f(3) \to \cdots\).
We plan to refine the method by incorporating more sophisticated techniques
developed for sequential programs~\cite{hashimoto2015refinement}.
}
\section{Implementation and Preliminary Experiments}  \label{sec:implementation}

\subsection{Implementation}
We have implemented a termination analysis tool for the $\pi$-calculus based on the method
described in Sections~\ref{sec:approach} and \ref{sec:refinement}.
This tool was written in OCaml. We chose C language as the actual target of our translation,
and used \ult{}~\cite{heizmann2016ultimate} (version 0.2.1) as a termination analysis tool for C.

For the refinement type inference described in Section~\ref{sec:inference},
we have used
\hoice{}~\cite{DBLP:journals/jar/ChampionCKS20} (version 1.8.3)
and \zthree{}~\cite{de2008z3} (version 4.8.10) as backend CHC solvers.
Since a stronger solution for CHCs is preferable as discussed at the end of
Section~\ref{sec:inference}, if \hoice{} and \zthree{} return
different solutions
\(\set{\predvar_1\mapsto \pred_1,\ldots,\predvar_n\mapsto \pred_n}\)
and 
\(\set{\predvar_1\mapsto \pred'_1,\ldots,\predvar_n\mapsto \pred'_n}\),
then we used the solution
\(\set{\predvar_1\mapsto \pred_1\land \pred'_1,\ldots,\predvar_n\mapsto \pred_n\land \pred'_n}\)
for inserting \textbf{Assume} commands.

To make the analysis precise, the implementation is actually based on an extension of
the refinement type system in Section~\ref{sec:rtype} with subtyping; 
see \ifaplas{the extended version~\cite{fullversion}}{Appendix~\ref{sec:subtyping}}.

\subsection{Preliminary Experiments}
\label{sec:experiments}

We prepared a collection of $\pi$-calculus processes,
and ran our tool on them.
Our experiment was conducted on Intel Core i7-10850H CPU with 32GB memory.
For comparison, we have also run the termination analysis mode of
\typical{}~\cite{TyPiCal,KobayashiHybrid} on the same instances.

\newcolumntype{Y}{>{\centering\arraybackslash}p{5em}}
\newcolumntype{Z}{>{\centering\arraybackslash}p{12em}}
\begin{table}[tbp]
    \centering
    \caption{Results of the experiments}
    \begin{tabular}{ZYYY}
        \hline
        Test case & Basic & Refined & \typical{} \\ \hline \hline
        client-server   &  2.5 &  2.7 & 0.006 \\
        stateful-server-client & FAIL & FAIL & 0.006 \\
        parallel-or     &  2.4 &  2.9 & 0.006 \\
        broadcast       &  3.6 &  3.3 & 0.004 \\
        btree           & FAIL & FAIL & 0.011 \\
        stable          & FAIL & FAIL & 0.003 \\
        ds-ex5-1        & FAIL & FAIL & 0.002 \\
        factorial       &  3.9 &  4.4 & 0.002 \\
        ackermann       & 22.4 & 26.0 & 0.003 \\
        \hline
        fibonacci       &  4.8 &  4.4 & 0.003 \\
        even/odd        &  7.0 &  7.6 & 0.002 \\
        factorial-pred  & FAIL & 28.2 & FAIL  \\
        fibonacci-pred  & FAIL & 28.2 & FAIL  \\ 
        even/odd-pred   & FAIL & 10.1 & FAIL  \\
        sum-neg         &  7.6 & 13.1 & FAIL  \\
        upperbound      &  3.8 &  3.9 & FAIL  \\
        nested-replicated-input1    &  2.3 &  2.4 & FAIL  \\
        nested-replicated-input2    & FAIL & FAIL & FAIL  \\
        nested-replicated-input3    &  3.7 &  4.0 & 0.010 \\
        deadlock        & FAIL &  2.9 & FAIL  \\
        \hline
    \end{tabular}
    \label{tab:result}
\end{table}

The experimental results are summarized in Table~\ref{tab:result}.
The columns ``Basic'' and ``Refined'' show the results for
the basic method in Section~\ref{sec:approach}
and the refined method in Section~\ref{sec:refinement} respectively.
The numbers show the running times measured in seconds, and
``FAIL'' means that the verification failed due to the incompleteness of
the reduction; non-terminating sequential programs were obtained in those cases.
The column ``\typical{}'' shows the analogous result for \typical{}.
The termination analysis of \typical{} roughly depends on Deng and Sangiorgi's
method~\cite{Deng06IC}. ``FAIL'' in the column means that the process does not
satisfy the (sufficient) conditions for termination~\cite{Deng06IC}.
The termination analysis of \typical{} treats numbers as natural numbers,
and is actually unsound in the presence of arbitrary integers
(for example, \(\outatom{f}{m}{r}\PAR
\rinexp{f}{x}{r}{\ifexp{x=0}{\soutatom{r}{1}}{\outatom{f}{x-1}{r}}}\) is
judged to be terminating for any \(m\)).

The test cases consist of two categories. The first one, shown above
the horizontal line, has been taken from the sample programs of \typical{}.
Among them, we have excluded out those that are not
related to termination analysis (note that \typical{} can perform deadlock/lock-freedom
analysis and information flow analysis besides termination analysis).
The second category, shown below the horizontal line, consists of those prepared by
us,\footnote{Unfortunately, there are no
  standard benchmark set for the termination analysis for the \(\pi\)-calculus.}
including the samples discussed in the paper.
All the processes in the test cases are terminating.

For \testcase{stateful-server-client}, \testcase{btree}, \testcase{stable},
and \testcase{ds-ex5-1} in the first category, and \testcase{nested-replicated-input2}
in the second category,
our analysis fails for essentially the same reason.
The following is a simplified version of \testcase{ds-ex5-1}:
\[
\soutatom{a}{} \PAR \soutatom{b}{} \PAR \srinexp{a}{}\sinexp{b}{}\soutatom{a}{}.
\]
The process above is terminating because each run of the third process consumes
a message on \(b\). Our reduction however ignores communications on \(b\) and produces
the following non-terminating program:
\[
(\set{\Fname{\reg_a}()=\Fname{\reg_a}(), \Fname{\reg_b}()=\skipexp}, \Fname{\reg_a}()\nondet
\Fname{\reg_b}()).
\]

For the second category, our refined method clearly outperforms the basic method
and \typical{}.
We explain some of the test cases in the second category.
The test cases \testcase{fibonacci} and \testcase{nested-replicated-input3} are from Example~\ref{ex:fib}
and \ref{ex:nested_rep} respectively, and \testcase{even/odd} is a mutually recursive process
that judges whether a given number is even or odd.
The process \testcase{deadlock} is the following one:
\[    \srinexp{\cname{loop}}{} \soutatom{\cname{loop}}{} 
    \PAR \sinexp{r}{} \soutatom{\cname{loop}}{}. \]
    This process is terminating, because the subprocess
    \(\sinexp{r}{} \soutatom{\cname{loop}}{}\) is blocked forever, without ever sending
    a message to \(\cname{loop}\).
    With the refinement type system, the channel \(r\) is given type:
    \(\sChty{\reg}{\epsilon;\FALSE}\), and
    \(\sinexp{r}{} \soutatom{\cname{loop}}{}\) is translated to:
    \[
\letexp{\epsilon}{\epsilon}{\assexp{\FALSE}{\Fname{\reg_{\cname{loop}}}()}},
    \]
    which is terminating by $\ASSUME(\FALSE)$.
    The process \testcase{upperbound} is the following process:
    \begin{align*}
    \soutatom{f}{0} \PAR \srinexp{f}{x} \ifexp{x>10}{\zeroexp}{\soutatom{f}{x+1}}.
\end{align*}
    It is terminating because the argument of \(f\) monotonically increases, and is bounded
    above by \(10\). \typical{} cannot make such reasoning.

\section{Related Work}  \label{sec:relatedwork}

As mentioned in Section~\ref{sec:introduction}, there have been a number of
studies on termination of the \(\pi\)-calculus~\cite{Deng06IC,Demangeon07,SangiorgiTermination,KobayashiHybrid,Yoshida04IC,DBLP:journals/jlp/DemangeonHS10,Venet98SAS},
but most of them have been rather theoretical, and few tools have been developed.
Our technique has been partially inspired by Deng and Sangiorgi's work~\cite{Deng06IC}, especially by their observation that a process is terminating just if
there is no infinite chain of communications on replicated input processes.
Deng and Sangiorgi ensured the lack of infinite chains by using a type system.
They actually proposed four system, a core system and three kinds of extensions.
Our approach roughly corresponds to the first extension of their system (\cite{Deng06IC}, Section 4), which requires that, in every
chain of communications, the values of messages
monotonically decrease. An advantage of our approach is that we can use mature tools
for sequential programs to reason about how the values of messages change.
Our approach does not subsume the second and third extensions of Deng and Sangiorgi's
system, which take into account synchronizations over multiple channels; it is
left for future work to study whether such extensions can be incorporated in
our approach.

To our knowledge,
\typical{}~\cite{TyPiCal,KobayashiHybrid}
is the only automated termination analysis tool.
\typical{}'s termination analysis is based on Deng and Sangiorgi's
method~\cite{Deng06IC}, but is quite limited in reasoning about
the values sent along channels; it only considers natural numbers, and
the ordering on them is limited to the standard order on natural numbers.
Thus, for example, \typical{} cannot prove the termination of the process \testcase{upperbound}
as described in Section~\ref{sec:implementation}.

Recently, there have been studies on type systems for estimating
the (time) complexity of processes for the
\(\pi\)-calculus~\cite{DBLP:conf/esop/BaillotG21,DBLP:conf/concur/BaillotG021} and related session
calculi~\cite{DBLP:conf/lics/Das0P18,DBLP:journals/pacmpl/Das0P18}.
Since the existence of a finite upper-bound implies termination, those analyses
can, in principle, be used also for reasoning about termination, but the resulting
termination analysis would be too conservative.
It would be interesting to investigate whether our approach of reduction to 
sequential programs can be extended to achieve complexity analysis for the
\(\pi\)-calculus.
\changed{Refinement types for variants of the \(\pi\)-calculus have been
studied before~\cite{griffith2013liquidpi,baltazar2012linearly}.
Our contribution in this regard is the application to
termination analysis.}

Cook et al.~\cite{Cook07PLDI} proposed a method for proving
termination of multi-threaded programs. Their technique also makes use
of a termination tool for sequential programs.  As their language
model is quite different from ours (they deal with imperative programs
with shared memory and locks, rather than message-passing programs),
however, their method is quite different from ours.

\section{Conclusion}  \label{sec:conclusion}

We have proposed a method for reducing termination verification for
the \(\pi\)-calculus to that for sequential programs and implemented
an automated termination analysis tool based on the method.
Our approach allows us to reuse powerful termination analysis tools
developed for sequential programs.

Future work includes (i) a further refinement of our reduction
and (ii) applications of our method to other message-passing-style
concurrent programming languages.
As for the first point, there are a few known limitations in the current reduction.
Besides the issues mentioned at the end of Example~\ref{ex:nested_rep}
and Section~\ref{sec:implementation},
there is a limitation that channels of the same region are merged to the same function,
which leads to the loss of precision.
For example, consider:
\begin{align*}
  &\srinexp{c}{x} \ifexp{x < 0}{\zeroexp }{\soutatom{c}{x - 1}} \\
&  \PAR \srinexp{d}{x} \ifexp{x > 0}{\zeroexp }{\soutatom{d}{x + 1}} \\
&\PAR \soutatom{e}{c} \PAR \soutatom{e}{d} \PAR \soutatom{c}{0}
\end{align*}
The process is terminating, but our approach fails to prove it. Since the same region
is assigned to \(c\) and \(d\) (because both are sent along \(e\)), the replicated input processes
are translated to non-deterministic function definitions:
\begin{align*}
    \Fname{\reg}(x) &= \ifexp{x < 0}{\skipexp}{\Fname{\reg}(x - 1)} \\
    \Fname{\reg}(x) &= \ifexp{x > 0}{\skipexp}{\Fname{\reg}(x + 1)}, 
\end{align*}
which cause an infinite reduction \(\Fname{\reg}(0) \red \Fname{\reg}(-1)\red \Fname{\reg}(0)\red \cdots\).
One remedy to this problem would be to introduce region polymorphism and translate
processes to higher-order functional programs.

\subsubsection*{Acknowledgments}
We would like to thank anonymous referees for useful comments.
This work was supported by JSPS KAKENHI Grant Number JP20H05703.

\bibliographystyle{splncs04}
\bibliography{ref,abbrv,koba}
\ifaplas{}{
\clearpage
\appendix
\section{Operational Semantics} \label{sec:operational_semantics}
\subsection{Reduction Semantics of the \( \pi \)-Calculus}
We define a reduction relation~\cite{milner1993polyadic} as the operational semantics of the $\pi$-calculus.

As usual, we first define the structural congruence relation \( \piequiv \) on the set of processes.
\begin{definition}[structural congruence for processes]
  The \emph{structural congruence relation} $\piequiv$ on %
  $\pi$-calculus processes is defined as the least congruence relation that satisfies the following %
  rules.
    \begin{gather*}
        P_1 \mid P_2 \piequiv P_2 \mid P_1
        \qquad (P_1 \mid P_2) \mid P_3 \piequiv P_1 \mid (P_2 \mid P_3) 
        \\     P \mid \textbf{0} \piequiv P 
        \qquad (\nu x)\,\textbf{0} \piequiv \textbf{0}
        \qquad (\nu x)(\nu y)P \piequiv (\nu y)(\nu x)P
        \\     (\nu x)(P_1 \mid P_2) \piequiv P_1 \mid (\nu x)P_2 \quad \text{if $x$ does not freely occur in $P_1$}
    \end{gather*}
\end{definition}

Next, we define the reduction relation on processes. %
\begin{definition}
  The \emph{reduction relation $\to$ on %
    processes} is defined by the set of rules in Figure~\ref{fig:pi_reduction}.
    We write $\to^*$ and \( \to^+ \)for the reflexive transitive closure and the transitive closure of the reduction relation $\to$, respectively.
\end{definition}
\begin{figure}[tbp]
    \centering
    \small
    \begin{minipage}{\linewidth}
        \centering
        \begin{prooftree}
            \AxiomC{$\len{\seq{y}} = \len{\seq{v}}$}
            \AxiomC{$\len{\seq{z}} = \len{\seq{w}}$}
            \AxiomC{$\seq{v} \Downarrow \seq{i}$}
            \RightLabel{\textsc{(R-Comm)}}
            \TrinaryInfC{$\inexp{x}{\seq{y}}{\seq{z}}P_1 \PAR  \outexp{x}{\seq{v}}{\seq{w}}P_2 \red [\seq{i}/ \seq{y}, \seq{w} / \seq{z} ]P_1 \PAR P_2$}
        \end{prooftree}
    \end{minipage}
    \begin{minipage}{0.48\linewidth}
        \centering
        \begin{prooftree}
            \AxiomC{$P_1 \red P_1'$}
            \RightLabel{\textsc{(R-Par)}}
            \UnaryInfC{$P_1 \PAR P_2 \red P_1' \PAR P_2$}
        \end{prooftree}
    \end{minipage}
    \begin{minipage}{0.5\linewidth}
        \centering
        \begin{prooftree}
            \AxiomC{$P \red P'$}
            \RightLabel{\textsc{(R-Nu)}}
            \UnaryInfC{$\nuexp{x \COL \chty} P \red \nuexp{x \COL \chty}P' $}
        \end{prooftree}
    \end{minipage}
    \begin{minipage}{\linewidth}
        \centering
        \begin{prooftree}
            \AxiomC{$\len{\seq{y}} = \len{\seq{v}}$}
            \AxiomC{$\len{\seq{z}} = \len{\seq{w}}$}
            \AxiomC{$\seq{v} \Downarrow \seq{i}$}
            \RightLabel{\textsc{(R-RComm)}}
            \TrinaryInfC{$\rinexp{x}{\seq{y}}{\seq{z}}P_1 \PAR  \outexp{x}{\seq{v}}{\seq{w}}P_2 \red \rinexp{x}{\seq{y}}{\seq{z}}P_1 \PAR [\seq{i}/ \seq{y}, \seq{w} / \seq{z}]P_1 \PAR P_2$}
        \end{prooftree}
    \end{minipage}
    \begin{minipage}{\linewidth}
        \centering
        \begin{prooftree}
            \AxiomC{$v \Downarrow i \neq 0$}
            \RightLabel{\textsc{(R-If-T)}}
            \UnaryInfC{$\ifexp{v}{P_1}{P_2} \red P_1$}
        \end{prooftree}
    \end{minipage}
    \begin{minipage}{\linewidth}
        \centering
        \begin{prooftree}
            \AxiomC{$v \Downarrow 0$}
            \RightLabel{\textsc{(R-If-F)}}
            \UnaryInfC{$\ifexp{v}{P_1}{P_2} \red P_2$}
        \end{prooftree}
    \end{minipage}
    \begin{minipage}{\linewidth}
        \centering
        \begin{prooftree}
            \AxiomC{\( \len{\seq{x}} = \len{\seq{i}}\)}
            \RightLabel{\textsc{(R-LetND)}}
            \UnaryInfC{$\ndlet{x}{P} \red [\seq{i} / \seq{x}]P$}
        \end{prooftree}
    \end{minipage}
    \begin{minipage}{\linewidth}
        \centering
        \begin{prooftree}
            \AxiomC{$P \piequiv P_1 \red P_1' \piequiv P'$}
            \RightLabel{\textsc{(R-Cong)}}
            \UnaryInfC{$P \red P'$}
        \end{prooftree}
    \end{minipage}
    \begin{minipage}{0.3\linewidth}
        \centering
        \begin{prooftree}
            \AxiomC{}
            \RightLabel{\textsc{(R-Int)}}
            \UnaryInfC{$i \Downarrow i$}
        \end{prooftree}
    \end{minipage}
    \begin{minipage}{0.38\linewidth}
        \centering
        \begin{prooftree}
            \AxiomC{$\seq{v} \Downarrow \seq{i}$}
            \RightLabel{\textsc{(R-Op)}}
            \UnaryInfC{$\op(\seq{v}) \Downarrow \llbracket \op\rrbracket (\seq{i})$.}
        \end{prooftree}
    \end{minipage}
    \normalsize
    \caption{The reduction rules of the $\pi$-calculus. Here \( \llbracket \op \rrbracket \colon \mathbb{Z}^n \to \mathbb{Z} \) represents the interpretation of the operation \( \op \) whose arity is \( n \). }
    \label{fig:pi_reduction}
\end{figure}

\subsection{Reduction Semantics of the Sequential Language}
Here, we define the reduction semantics for the sequential language.
We actually define two kinds of semantics:
one is a standard reduction relation
\((\Def,\Exp)\sred (\Def',\Exp')\), which evaluates \( \Exp_1 \nondet \Exp_2\) to either \( \Exp_1 \) or \( \Exp_2\); the other is a non-standard reduction relation
\((\Def,\Exp)\nsred(\Def',\Exp')\), which
 does not discard branches of non-deterministic choices.

\begin{definition}
  The \emph{reduction relation $\seqto$ on %
    sequential programs} is defined by the set of rules in Figure~\ref{fig:seq_reduction}.
    In the rule \rn{SR-App} we are considering \(\Def\) as a map that maps \( f \) to \( \Def(f) =  \{ \lambda \seq{x}. \Exp \mid \fdef{f}{\seq{x}}{\Exp} \in \Def \}\).
\end{definition}
\begin{figure}[tb]
    \centering
    \small
    \begin{minipage}{\linewidth}
        \centering
        \begin{prooftree}
            \AxiomC{$\len{\seq{x}} = \len{\seq{i}}$}
            \RightLabel{\textsc{(SR-LetND)}}
            \UnaryInfC{$(\Def, \ndlet{x}{\Exp} )  \sred (\Def, [\seq{i}/\seq{x}]\Exp)$}
        \end{prooftree}
    \end{minipage}
    \begin{minipage}{\linewidth}
        \centering
        \begin{prooftree}
            \AxiomC{\( (\lambda \seq{y}.\Exp) \in  \Def(f)  \)}
            \AxiomC{$\len{\seq{y}} = \len{\seq{v}}$}
            \AxiomC{$\seq{v} \Downarrow \seq{i}$}
            \RightLabel{\textsc{(SR-App)}}
            \TrinaryInfC{$(\Def, f(\seq{v})) \sred (\Def, [\seq{i}/\tilde{y}] \Exp)$}
        \end{prooftree}
    \end{minipage}
    \begin{minipage}{\linewidth}
        \centering
        \begin{prooftree}
            \AxiomC{$v \Downarrow i$ \qquad \( i \neq 0\)}
            \RightLabel{\textsc{(SR-If-T)}}
            \UnaryInfC{$(\Def, \ifexp{v}{\Exp_1}{\Exp_2})  \sred (\Def, \Exp_1)$}
        \end{prooftree}
    \end{minipage}
    \begin{minipage}{\linewidth}
        \centering
        \begin{prooftree}
            \AxiomC{$v \Downarrow 0$}
            \RightLabel{\textsc{(SR-If-F)}}
            \UnaryInfC{$(\Def, \ifexp{v}{\Exp_1}{\Exp_2}) \sred (\Def, \Exp_2)$}
        \end{prooftree}
    \end{minipage}
    \\[.3cm]
    \begin{minipage}{\linewidth}
        \centering
        \begin{prooftree}
            \AxiomC{}
            \RightLabel{\textsc{(SR-Cho-L)}}
            \UnaryInfC{$(\Def, \Exp_1 \nondet \Exp_2) \sred (\Def, \Exp_1)$}
        \end{prooftree}
    \end{minipage}
    \\[.3cm]
    \begin{minipage}{\linewidth}
        \centering
        \begin{prooftree}
            \AxiomC{}
            \RightLabel{\textsc{(SR-Cho-R)}}
            \UnaryInfC{$(\Def, \Exp_1 \nondet \Exp_2) \sred (\Def, \Exp_2)$}
        \end{prooftree}
    \end{minipage}
    \begin{minipage}{\linewidth}
        \centering
        \begin{prooftree}
            \AxiomC{$v \Downarrow i$ \qquad \( i \neq 0\)}
            \RightLabel{\textsc{(SR-Ass-T)}}
            \UnaryInfC{$(\Def, \textbf{Assume}(v);E) \sred (\Def, \Exp)$}
        \end{prooftree}
    \end{minipage}
    \begin{minipage}{\linewidth}
        \centering
        \begin{prooftree}
            \AxiomC{$v \Downarrow 0$}
            \RightLabel{\textsc{(SR-Ass-F)}}
            \UnaryInfC{$(\Def, \textbf{Assume}(v);\Exp) \sred (\Def, \skipexp)$}
        \end{prooftree}
    \end{minipage}
    \normalsize
    \caption{Reduction rules of the sequential language}
    \label{fig:seq_reduction}
\end{figure}

We now define a non-standard reduction relation that keeps
all the non-deterministic branches \shchanged{during} the reduction.
This non-standard reduction relation has a better match with the reduction of processes.
Since processes have structural rules, we also introduce structural rules on expressions.

\begin{definition}[structural congruence for sequential expressions]
    The \emph{structural congruence relation for expressions}, written \( \Exp_1 \expequiv \Exp_2 \), is defined as the least congruence relation that satisfies the following rules.
    \begin{gather*}
        \Exp_1 \nondet \Exp_2 \expequiv \Exp_2 \nondet \Exp_1
        \qquad (\Exp_1 \nondet \Exp_2) \nondet \Exp_3 \expequiv \Exp_1 \nondet (\Exp_2 \nondet \Exp_3)
        \qquad \Exp \nondet \skipexp \expequiv \Exp
        \qquad
    \end{gather*}

\end{definition}

\begin{definition}
    The \emph{non-standard reduction relation} $\nsred$ on the set of sequential programs is defined
    by the set of rules in Figure~\ref{fig:seq_reduction2} together with all the rules in
    Figure~\ref{fig:seq_reduction} (with $\sred$ replaced by $\nsred$), except for \rn{SR-Cho-L} and  \rn{SR-Cho-R}.
    To simplify the notation, we may write \( \Exp \nsred_\Def \Exp' \) for \( (\Def, \Exp) \nsred (\Def, \Exp') \) or even \( \Exp \nsred \Exp' \) if \( \Def \) is clear from the context.
\end{definition}

\begin{figure}[tb]
    \centering
    \small
    \begin{minipage}{\linewidth}
        \centering
        \begin{prooftree}
            \AxiomC{\( \Exp \expequiv \Exp_1 \quad (\Def, \Exp_1) \nsred (\Def, \Exp'_1) \quad \Exp'_1 \expequiv \Exp' \)}
            \RightLabel{\textsc{(SR-Cong)}}
            \UnaryInfC{$(\Def, \Exp) \nsred (\Def, \Exp')$}
        \end{prooftree}
    \end{minipage}
    \begin{minipage}{\linewidth}
        \centering
        \begin{prooftree}
            \AxiomC{$(\Def, \Exp_1) \nsred (\Def, \Exp_1')$}
            \RightLabel{\textsc{(SR-ChoBody-L)}}
            \UnaryInfC{$(\Def, \Exp_1 \nondet \Exp_2) \nsred (\Def, \Exp_1' \nondet \Exp_2)$}
        \end{prooftree}
    \end{minipage}
    \begin{minipage}{\linewidth}
        \centering
        \begin{prooftree}
            \AxiomC{$(\Def, \Exp_2) \nsred (\Def, \Exp_2')$}
            \RightLabel{\textsc{(SR-ChoBody-R)}}
            \UnaryInfC{$(\Def, \Exp_1 \nondet \Exp_2) \nsred (\Def, \Exp_1 \nondet \Exp_2')$}
        \end{prooftree}
    \end{minipage}

    \normalsize
    \caption{Additional rules for the  non-standard reduction relation}
    \label{fig:seq_reduction2}
\end{figure}

For the proof of the soundness of our transformation
(given in Appendix~\ref{sec:soundness}),
we also prepare a relation \( \Def \subdef \Def' \), which intuitively
means that \(\Def\) can simulate \(\Def'\) so that if \((\Def,\Exp)\) is terminating,
so is \((\Def',\Exp)\)
(cf.\ Lemma~\ref{lem:subdef}).

\begin{figure}[tb]
    \centering
    \small
    \begin{minipage}{.4\linewidth}
        \centering
        \begin{prooftree}
            \AxiomC{ }
            \RightLabel{\textsc{(D-Id)}}
            \UnaryInfC{\( \Def \subdef \Def\)}
        \end{prooftree}
    \end{minipage}
    \begin{minipage}{.4\linewidth}
        \centering
        \begin{prooftree}
            \AxiomC{\( \Def = \Def_1 \mrg \Def_2 \)}
            \RightLabel{\textsc{(D-Splt)}}
            \UnaryInfC{\( \Def \subdef \Def_1 \)}
        \end{prooftree}
    \end{minipage}
    \begin{minipage}{\linewidth}
        \centering
        \begin{prooftree}
            \AxiomC{\( \Def = (\ndlet{x}{\Def'}) \) \qquad \( \len{\seq{x}} = \len{\seq{v}} \)}
            \RightLabel{\textsc{(D-ND)}}
            \UnaryInfC{\( \Def \subdef [\seq{v}/\seq{x}]\Def' \)}
        \end{prooftree}
    \end{minipage}
    \begin{minipage}{.4\linewidth}
        \centering
        \begin{prooftree}
            \AxiomC{ \( \Def_1 \subdef \Def_1' \)}
            \RightLabel{\textsc{(D-Mrg)}}
            \UnaryInfC{\( \Def_1 \mrg \Def_2 \subdef \Def_1' \mrg \Def_2 \)}
        \end{prooftree}
    \end{minipage}
    \begin{minipage}{.4\linewidth}
        \centering
        \begin{prooftree}
            \AxiomC{ \( \Def_1 \subdef \Def_2 \) \qquad \( \Def_2 \subdef \Def_3 \)}
            \RightLabel{\textsc{(D-Trns)}}
            \UnaryInfC{\( \Def_1 \subdef \Def_3 \)}
        \end{prooftree}
    \end{minipage}
    \normalsize
    \caption{Preorder on function definitions}
    \label{fig:subdef}
\end{figure}

\begin{lemma}
    \label{lem:subdef}
    Suppose that \( \Def \subdef \Def' \) and \( (\Def', \Exp ) \nsred ( \Def', \Exp') \).
    Then \( (\Def, \Exp ) \nsred^+ ( \Def, \Exp') \).
\end{lemma}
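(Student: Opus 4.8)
The plan is to argue by induction on the derivation of $\Def \subdef \Def'$, with a secondary induction on the derivation of the given step $(\Def', \Exp) \nsred (\Def', \Exp')$. The guiding observation is that among the rules defining $\nsred$, only \rn{SR-App} inspects the definition component: \rn{SR-LetND}, \rn{SR-If-T}, \rn{SR-If-F}, \rn{SR-Ass-T} and \rn{SR-Ass-F} rewrite the expression without consulting $\Def$, while \rn{SR-Cong}, \rn{SR-ChoBody-L} and \rn{SR-ChoBody-R} merely propagate a reduction through $\expequiv$ or through an $\nondet$-context. Accordingly, in the secondary induction: if the last rule is one of the former group, the same rule instance applies in $\Def$, so $(\Def, \Exp) \nsred (\Def, \Exp')$ in a single step; if it is \rn{SR-Cong}, \rn{SR-ChoBody-L} or \rn{SR-ChoBody-R}, we apply the secondary induction hypothesis to the sub-reduction and re-wrap with the same rule, which can only multiply the number of steps, hence preserves ``$\nsred^+$''; and only the case where the step is a top-level instance of \rn{SR-App} — say $\Exp = f(\seq w)$, $\seq w \Downarrow \seq j$, $\lambda\seq y.E_f \in \Def'(f)$ and $\Exp' = [\seq j/\seq y]E_f$ — genuinely uses the structure of $\Def \subdef \Def'$.

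For that \rn{SR-App} case we inspect the last rule of $\Def \subdef \Def'$. \rn{D-Id} is immediate. For \rn{D-Splt}, and more generally whenever $\Def'(f) \subseteq \Def(f)$, the very same \rn{SR-App} instance fires in $\Def$; here we use an easy monotonicity observation: if $\Def_1 \subseteq \Def_2$ as sets of definitions, then every $\nsred$-step under $\Def_1$ is an $\nsred$-step under $\Def_2$ (routine structural induction, again using that only \rn{SR-App} reads the definitions). For \rn{D-ND}, where $\Def = \ndlet{x}{\Def_0}$ and $\Def' = [\seq v/\seq x]\Def_0$, the body of $f$ in $\Def$ is $\ndlet{x}{\Exp_0}$ whenever $\Def_0$ contains $\fdef{f}{\seq y}{\Exp_0}$ and $\Def'$ uses $E_f = [\seq v/\seq x]\Exp_0$; so $\Def$ simulates the step in two moves, an \rn{SR-App} step to $[\seq j/\seq y]\ndlet{x}{\Exp_0} = \ndlet{x}{[\seq j/\seq y]\Exp_0}$ followed by an \rn{SR-LetND} step that instantiates $\seq x$ with exactly $\seq v$, reaching $[\seq v/\seq x][\seq j/\seq y]\Exp_0 = [\seq j/\seq y][\seq v/\seq x]\Exp_0 = \Exp'$ (after the usual $\alpha$-renaming of $\seq x,\seq y$ away from the free variables of $\seq v,\seq j$ so the two substitutions commute). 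For \rn{D-Mrg}, where $\Def = \Def_1 \mrg \Def_2$, $\Def' = \Def_1' \mrg \Def_2$ and $\Def_1 \subdef \Def_1'$, the consumed body is either in $\Def_2(f) \subseteq \Def(f)$, so one step suffices, or in $\Def_1'(f)$, in which case $(\Def_1', f(\seq w)) \nsred (\Def_1', \Exp')$ is a valid step, the primary induction hypothesis gives $(\Def_1, f(\seq w)) \nsred^+ (\Def_1, \Exp')$, and monotonicity lifts this to $(\Def, f(\seq w)) \nsred^+ (\Def, \Exp')$. Finally, for \rn{D-Trns} with $\Def \subdef \Def'' \subdef \Def'$, the primary induction hypothesis for $\Def'' \subdef \Def'$ yields $(\Def'', \Exp) \nsred^+ (\Def'', \Exp')$, and iterating the primary induction hypothesis for $\Def \subdef \Def''$ along that (at least one-step) sequence — each step becoming one or more steps under $\Def$ — yields $(\Def, \Exp) \nsred^+ (\Def, \Exp')$.

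I expect the main obstacle to be the \rn{D-Mrg} case, which forces a ``project, apply the hypothesis, re-embed'' pattern: a single step of the merged program must first be recognized as a step of the $\Def_1'$-component alone before the induction hypothesis applies, and the context $\Def_2$ is restored only afterwards via monotonicity; this is where the secondary structural induction has to carry enough bookkeeping — which $\nondet$-context the redex sits in, and from which of $\Def_1', \Def_2$ the consumed body came — and where getting the quantifier order between the primary and secondary inductions right is delicate. A secondary nuisance is the capture-avoidance in the \rn{D-ND} case, where the two-step simulation must land syntactically on $\Exp'$, which needs the usual $\alpha$-hygiene; but this is routine. Everything else is straightforward case analysis.
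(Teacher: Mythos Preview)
Your proposal is correct and takes essentially the same approach as the paper, whose proof is the single line ``By induction on the derivation of $\Def \subdef \Def'$.'' Your secondary induction on the reduction derivation and the explicit treatment of each $\subdef$-rule (including the monotonicity observation for \rn{D-Mrg} and the two-step simulation for \rn{D-ND}) are exactly the details the paper leaves implicit.
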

\begin{proof}
    By induction on the derivation of \( \Def \subdef \Def' \).
    \qed
\end{proof}
\section{Proof of the Soundness}  \label{sec:soundness}
Here we prove the soundness of the translation (Theorem~\ref{thm:soundness}) saying that if the sequential program \( (\Def, \Exp) \) obtained by translating \( P \) is terminating, \( P \) is also terminating.
The proof is split into two steps.
First, we show that reductions from \( P \) can be simulated by non-standard reductions from \( (\Def, \Exp ) \) (Lemma~\ref{lem:simulate}).
This implies that if \( (\Def, \Exp) \) is terminating with respect to the non-standard reduction, then \( P \) is terminating.
Then we show that if  \( (\Def, \Exp) \) is terminating with respect to the standard reduction, then \( (\Def, \Exp) \) is terminating with respect to the non-standard reduction (Lemma~\ref{lem:konig}).

We start by preparing some auxiliary lemmas that are used to show the simulation relation.

\begin{lemma}[substitution]
    \label{lem:subst}
    If \( \env; \chenv \vdash \seq{v} : \seq{\ty} \),
       \( \env; \chenv \vdash \seq{w} : \seq{\chty} \)
       and \( \env, \seq{y}:\seq{\ty}; \chenv, \seq{z}:\seq{\chty} \vdash P \Rightarrow
       \prog{\Def}{\Exp} \),
       then \( \env; \chenv \vdash [\seq{v}/\seq{y}, \seq{w} / \seq{z}]P \Rightarrow
       \prog{[\seq{v}/\seq{y}]\Def}{[\seq{v}/\seq{y}] \Exp} \).
\end{lemma}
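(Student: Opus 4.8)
The plan is to prove the statement by induction on the derivation of $\env, \seq{y}\COL\seq{\ty}; \chenv, \seq{z}\COL\seq{\chty} \vdash P \Rightarrow \prog{\Def}{\Exp}$, with a case analysis on the last transformation rule of Figure~\ref{fig:program_transformation}. Three preliminary observations will make every case uniform. (i) Substituting channel names for channel variables, $[\seq{w}/\seq{z}]$ (together with the corresponding shrinking of the channel context), leaves the generated program unchanged: the only way a channel variable $x$ can influence $\prog{\Def}{\Exp}$ is through a function name $\Fname{\regof{x}}$ in \rn{SX-Out}, \rn{SX-RIn}, \rn{SX-Nil}, and since $z_i$ and $w_i$ carry the same channel type $\chty_i$ they have the same region; this is exactly why the conclusion applies only $[\seq{v}/\seq{y}]$ to $\Def$ and $\Exp$. (ii) The hypothesis $\env;\chenv\vdash\seq{w}\COL\seq{\chty}$ forces $w_i\COL\chty_i\in\chenv$, so the top-level region of each $\chty_i$ already occurs in $\chenv$; hence the set of trivial definitions emitted by \rn{SX-Nil} under $\env;\chenv$ coincides with the one emitted under $\env, \seq{y}\COL\seq{\ty}; \chenv, \seq{z}\COL\seq{\chty}$, which is what makes the base case match syntactically rather than merely ``up to extra trivial definitions''. (iii) The substitution $[\seq{v}/\seq{y}]$ commutes with the operation $\ndlet{x}{\Def}$ on function-definition sets and with the program-tuple constructions appearing in each rule, provided the bound variable $\seq{x}$ and the formal parameters of the definitions are $\alpha$-renamed away from $\seq{y}$ and $\FV(\seq{v})$ --- which we may always assume. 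I would also use, without further comment, the standard substitution and weakening lemmas for the simple type system in order to re-establish the typing premises of each rule after substitution.

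Given these observations the induction is mechanical. In \rn{SX-Nil} the claim follows from (ii) together with $[\seq{v}/\seq{y},\seq{w}/\seq{z}]\zeroexp = \zeroexp$. For \rn{SX-Nu}, \rn{SX-If}, \rn{SX-Par}, and \rn{SX-LetND} one applies the induction hypothesis to each sub-derivation and uses (iii) to push $[\seq{v}/\seq{y}]$ through the reassembly, $\alpha$-renaming the bound integer variable of \rn{SX-LetND} first. For \rn{SX-In} and \rn{SX-RIn} the bound names of the input prefix of $P$ may, by $\alpha$-conversion, be taken disjoint from $\seq{y},\seq{z},\FV(\seq{v}),\FV(\seq{w})$, so substitution commutes with the prefix; the induction hypothesis (after weakening the context by those bound names) yields the transformation of the body, and \rn{SX-In}/\rn{SX-RIn} is re-applied under $\env;\chenv$, with observation (i) ensuring the channel subject --- hence its region and the function name $\Fname{\regof{\cdot}}$ --- is unchanged. \rn{SX-Out} is analogous, additionally invoking the simple-type substitution lemma for $\env;\chenv\vdash[\seq{v}/\seq{y}]\seq{v}'\COL\seq{\ty}'$ and the identity $[\seq{v}/\seq{y}](\Fname{\reg}(\seq{v}')\nondet\Exp) = \Fname{\reg}([\seq{v}/\seq{y}]\seq{v}')\nondet[\seq{v}/\seq{y}]\Exp$.

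The real difficulty lies not in any single case but in the bookkeeping of the function-definition sets: since the lemma asserts a syntactic identity of the generated $\Def$ and $\Exp$ rather than an identity up to $\alpha$-equivalence, commutation equalities such as $[\seq{v}/\seq{y}](\ndlet{x}{\Def}) = \ndlet{x}{[\seq{v}/\seq{y}]\Def}$ and the \rn{SX-RIn} reassembly equation $[\seq{v}/\seq{y}]\bigl(\set{\fdef{\Fname{\reg}}{\seq{x}}{\Exp}} \mrg (\ndlet{x}{\Def})\bigr) = \set{\fdef{\Fname{\reg}}{\seq{x}}{[\seq{v}/\seq{y}]\Exp}} \mrg (\ndlet{x}{[\seq{v}/\seq{y}]\Def})$ must hold literally. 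I would discharge this by fixing once and for all a naming convention in which every bound variable and every formal parameter occurring in a transformation derivation is chosen distinct from the substituted names $\seq{y}$ and from $\FV(\seq{v})$; under that convention all these equalities are immediate, and the remainder of the proof is a routine structural recursion following the shape of the given derivation.
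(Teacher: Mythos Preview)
Your proposal is correct and takes the same approach as the paper, which simply states ``By induction on the derivation of $\env, \seq{y}:\seq{\ty}; \chenv, \seq{z}:\seq{\chty} \vdash P \Rightarrow \prog{\Def}{\Exp}$'' without spelling out any of the cases. Your observations (i)--(iii) make explicit the bookkeeping that the paper leaves implicit, and in particular your point (ii) about \rn{SX-Nil} (that the regions contributed by $\seq{z}\COL\seq{\chty}$ are already present via $\seq{w}\COL\seq{\chty}\in\chenv$, so the set of trivial definitions is unchanged) is exactly the nontrivial detail one must check.
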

\begin{proof}
    By induction on the derivation of $\env, \seq{y}:\seq{\ty}; \chenv, \seq{z}:\seq{\chty} \vdash P \Rightarrow \prog{\Def}{\Exp}$.
    \qed
\end{proof}

\begin{lemma} \label{lem:cong}
    If $P \piequiv P'$ and $\env; \chenv \vdash P \Rightarrow \prog{\Def}{\Exp}$,
    then there exists $\Exp'$ such that
    $\Exp \expequiv \Exp'$
    and $\env; \chenv \vdash P' \Rightarrow \prog{\Def}{\Exp'}$.
\end{lemma}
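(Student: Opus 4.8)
The plan is to prove Lemma~\ref{lem:cong} by induction on the derivation of $P \piequiv P'$, following the standard structure of the definition of structural congruence: it is the least congruence generated by the axioms for commutativity and associativity of $\PAR$, the unit law $P \PAR \zeroexp \piequiv P$, the laws for $\nu$, and scope extrusion $(\nu x)(P_1 \PAR P_2) \piequiv P_1 \PAR (\nu x)P_2$ when $x \notin \FV(P_1)$. Since $\piequiv$ is a congruence relation, the induction has cases for reflexivity, symmetry, transitivity, and the congruence (closure-under-context) rules, in addition to one base case per axiom.

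For the axiom cases, I would unfold the transformation rules on both sides and check that the resulting main expressions are structurally congruent while the function-definition components coincide. For instance, for $P_1 \PAR P_2 \piequiv P_2 \PAR P_1$: by \rn{SX-Par}, $\env;\chenv \vdash P_1 \PAR P_2 \Rightarrow \prog{\Def_1 \mrg \Def_2}{\Exp_1 \nondet \Exp_2}$ from sub-derivations for $P_1$ and $P_2$; the same sub-derivations give $\env;\chenv \vdash P_2 \PAR P_1 \Rightarrow \prog{\Def_2 \mrg \Def_1}{\Exp_2 \nondet \Exp_1}$, and since $\mrg$ is (set) union, $\Def_1 \mrg \Def_2 = \Def_2 \mrg \Def_1$, while $\Exp_1 \nondet \Exp_2 \expequiv \Exp_2 \nondet \Exp_1$ by the first axiom of $\expequiv$. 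Associativity of $\PAR$ is handled identically using associativity of $\nondet$ under $\expequiv$; the unit law $P \PAR \zeroexp \piequiv P$ uses $\Exp \nondet \skipexp \expequiv \Exp$ together with the fact that \rn{SX-Nil} produces only trivial definitions $\fdef{\Fname{\reg}}{\seq{y}}{\skipexp}$, which are absorbed by $\mrg$ (here one needs that every region $\reg$ appearing in $\chenv$ already contributes such a trivial definition, or that adding trivial definitions does not change the program up to the equivalence being tracked). For the $\nu$-axioms $(\nu x)\zeroexp \piequiv \zeroexp$, $(\nu x)(\nu y)P \piequiv (\nu y)(\nu x)P$, and scope extrusion, the key point is that \rn{SX-Nu} simply discards the $\nu$-binder and passes the channel type into $\chenv$; so both sides yield literally the same $\prog{\Def}{\Exp}$ (modulo the order of bindings in $\chenv$, which is immaterial), and for scope extrusion one additionally observes that whether the binding $x \COL \chty$ is added before splitting into $\Def_1,\Exp_1$ and $\Def_2,\Exp_2$ or only on the $P_2$ side makes no difference to the translation of $P_1$, precisely because $x \notin \FV(P_1)$. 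The congruence closure cases are routine: if $P \piequiv P'$ was derived by placing a smaller equivalence inside a context $C[\cdot]$, I apply the IH to the hole and then re-apply the relevant \rn{SX-\ldots} rule, using that $\expequiv$ is itself a congruence to lift $\Exp \expequiv \Exp'$ through the context; symmetry and transitivity are immediate (transitivity needs that the $\Def$ component is preserved on the nose, which it is in every case, so the intermediate process can be used directly).

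The main obstacle I expect is the unit law $P \PAR \zeroexp \piequiv P$ and, more generally, keeping the $\Def$ component exactly equal (not merely equivalent) across congruent processes. The subtlety is that \rn{SX-Nil} emits a set of trivial definitions $\set{\fdef{\Fname{\reg}}{\seq{y}}{\skipexp} \mid x \COL \Chty{\reg}{\seq{\ty}}{\seq{\chty}} \in \chenv}$ determined by $\chenv$, so the translation of $\zeroexp$ is not the empty program; when we merge it with the translation of $P$, we must argue that these trivial definitions are already subsumed. This is true because every channel in $\chenv$ of region $\reg$ forces a definition $\fdef{\Fname{\reg}}{\seq{y}}{\cdots}$ to be generated somewhere (or, if $\reg$ is used only in positions that generate no definition, the trivial one is exactly what the other side would also be missing — so one should track this carefully, possibly strengthening the statement to "the $\Def$ components are equal as sets" and checking that \rn{SX-Nil}'s contribution is idempotent under $\mrg$ once any non-trivial definition for the same name is present, or simply that the union is taken over a multiset/set so duplicates collapse). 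If equality-on-the-nose turns out to be too strong, the fallback is to prove the lemma with a slightly coarser relation on $\Def$ (e.g. equality up to adding trivial definitions, which is harmless for termination by Lemma~\ref{lem:subdef}), but I anticipate the stated version goes through because $\mrg$ is genuinely set union and \rn{SX-Nil} only ever adds definitions that any complete translation over the same $\chenv$ also adds.
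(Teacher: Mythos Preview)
Your approach—induction on the derivation of $P \piequiv P'$—is exactly what the paper does; its proof is the single line ``By induction on the construction of $P \piequiv P'$'' with no case analysis. The subtlety you flag about the unit law and exact preservation of $\Def$ (the trivial clauses from \textsc{SX-Nil} versus their $\ndletsatom{y}$-wrapped counterparts arising inside $P$) is real and not addressed in the paper; your fallback of relaxing to a relation compatible with $\subdef$ is the right instinct and is all that the downstream simulation argument (Lemma~\ref{lem:simulate}) actually needs.
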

\begin{proof}
    By induction on the construction of $P \piequiv P'$.
    \qed
\end{proof}

Now we prove the simulation relation.

\begin{lemma} \label{lem:simulate}
    If $P \red P'$ and $\env; \chenv \vdash P \Rightarrow \prog{\Def}{\Exp}$,
    then there exist $\Def'$, $\Exp'$ such that \( \Def \subdef \Def' \),
    \( (\Def', \Exp) \nsred^+ (\Def', \Exp') \)
    and $\env; \chenv \vdash P' \Rightarrow \prog{\Def'}{\Exp'}$.
\end{lemma}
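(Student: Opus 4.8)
The plan is to prove Lemma~\ref{lem:simulate} by induction on the derivation of the reduction $P \red P'$, following the structure of the reduction rules in Figure~\ref{fig:pi_reduction}. The three auxiliary lemmas already available --- the substitution lemma (Lemma~\ref{lem:subst}), the congruence lemma (Lemma~\ref{lem:cong}), and the preorder lemma on function definitions (Lemma~\ref{lem:subdef}) --- are exactly the tools designed to handle the non-trivial cases, so the proof amounts to orchestrating them correctly for each rule.

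First I would handle the base cases \rn{R-Comm} and \rn{R-RComm}, which are the heart of the matter. Consider \rn{R-Comm}: $P = \inexp{x}{\seq{y}}{\seq{z}}P_1 \PAR \outexp{x}{\seq{v}}{\seq{w}}P_2 \red [\seq{i}/\seq{y}, \seq{w}/\seq{z}]P_1 \PAR P_2 = P'$, where $\seq{v} \Downarrow \seq{i}$. Inverting the transformation derivation, $P$ is translated via \rn{SX-Par}, with the left component translated by \rn{SX-In} to $\prog{\ndlet{y}{\Def_1}}{\ndlet{y}{\Exp_1}}$ where $P_1 \Rightarrow \prog{\Def_1}{\Exp_1}$, and the right component by \rn{SX-Out} to $\prog{\Def_2}{\Fname{\reg}(\seq{v}) \nondet \Exp_2}$ where $P_2 \Rightarrow \prog{\Def_2}{\Exp_2}$. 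So $\Exp = (\ndlet{y}{\Exp_1}) \nondet (\Fname{\reg}(\seq{v}) \nondet \Exp_2)$ and $\Def = (\ndlet{y}{\Def_1}) \mrg \Def_2$. I want to show $\Exp$ non-standard-reduces into something related to the translation of $P'$, which is $\prog{[\seq{v}/\seq{y}]\Def_1 \mrg \Def_2}{[\seq{v}/\seq{y}]\Exp_1 \nondet \Exp_2}$ by Lemma~\ref{lem:subst}. The key move is that \rn{SR-LetND} lets $\ndlet{y}{\Exp_1} \nsred [\seq{i}/\seq{y}]\Exp_1$, and since $[\seq{i}/\seq{y}]\Exp_1 = [\seq{v}/\seq{y}]\Exp_1$ up to evaluation of $\seq{v}$ (here I need to be slightly careful: the transformation substitutes simple expressions $\seq{v}$ while the reduction substitutes their values $\seq{i}$; both the process semantics and the sequential semantics evaluate simple expressions the same way, so the resulting programs are operationally indistinguishable, which I would phrase precisely or absorb into a remark). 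The non-standard choice-body rules \rn{SR-ChoBody-L} propagate this reduction under the $\nondet$. Then I set $\Def' := \Def$ (no enlargement needed here) --- actually I should check: the definition component of $P'$'s translation is $[\seq{v}/\seq{y}]\Def_1 \mrg \Def_2$, and I have $\Def = (\ndlet{y}{\Def_1}) \mrg \Def_2$; by \rn{D-ND} and \rn{D-Mrg}, $\Def \subdef [\seq{v}/\seq{y}]\Def_1 \mrg \Def_2$, so I take $\Def' := \Def$ and note that the required judgment $\env;\chenv \vdash P' \Rightarrow \prog{\Def'}{\Exp'}$ is obtained by replaying the transformation with the weaker (larger) $\Def'$ --- this needs a small monotonicity observation that transformation judgments are preserved under enlarging $\Def$ by extra non-deterministically-instantiated definitions, which I would state as a helper. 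The \rn{R-RComm} case is analogous but uses \rn{SX-RIn}, and the function call $\Fname{\reg}(\seq{v})$ unfolds via \rn{SR-App} to $[\seq{i}/\seq{y}]\Exp_1$ where $\Fname{\reg}(\seq{y}) = \Exp_1$ is in $\Def$, leaving the replicated input's function definition in place --- matching the fact that $\rinexp{x}{\seq{y}}{\seq{z}}P_1$ persists on the left of the process reduct.

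Next I would dispatch the congruence and context rules. For \rn{R-Cong}, I use Lemma~\ref{lem:cong} to move the transformation across $\piequiv$ on both sides (picking up an $\expequiv$-equivalent expression), apply the induction hypothesis to the $\piequiv$-canonical reduction, and close up with \rn{SR-Cong}. For \rn{R-Par}, \rn{R-Nu}, the induction hypothesis applies to the reducing subprocess; for \rn{R-Par} the resulting non-standard reduction is propagated by \rn{SR-ChoBody-L}/\rn{SR-ChoBody-R}, and the $\Def \subdef \Def'$ on the subprocess lifts to the whole by \rn{D-Mrg}; for \rn{R-Nu} the transformation simply drops the $\nu$-binder (\rn{SX-Nu}), so there is nothing to do beyond applying the IH. For \rn{R-If-T}, \rn{R-If-F}, and \rn{R-LetND} the transformation is structurally matched by \rn{SX-If}/\rn{SX-LetND} and the sequential rules \rn{SR-If-T}/\rn{SR-If-F}/\rn{SR-LetND} fire directly, with again a small evaluation-agreement remark and possibly \rn{D-Splt} to pick the relevant branch's definitions.

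The main obstacle I anticipate is twofold. The subtler of the two is the bookkeeping around $\Def$ versus $\Def'$: because the transformation places the definitions generated from replicated inputs into a \emph{global} pool that is merged and closed over non-deterministic instantiations, the definition component after a reduction is not syntactically equal to the one before, only related by $\subdef$; I must be disciplined about always producing a fresh transformation judgment for $P'$ with respect to the chosen $\Def'$, which requires a monotonicity/weakening helper lemma for $\Rightarrow$ under $\subdef$-enlargement that the excerpt does not state explicitly (I would add it, proved by an easy induction). The more routine but pervasive annoyance is reconciling the mismatch between the $\pi$-calculus semantics substituting evaluated values $\seq{i}$ and the transformation carrying unevaluated simple expressions $\seq{v}$; I would handle this once and for all by noting that $v \Downarrow i$ in the process semantics coincides with the evaluation used in \rn{SR-App}/\rn{SR-LetND}, so $[\seq{v}/\seq{y}]\Exp$ and $[\seq{i}/\seq{y}]\Exp$ denote the same reduction behavior, or alternatively by observing that $\ndlet{y}{\Exp_1} \nsred [\seq{i}/\seq{y}]\Exp_1$ for \emph{any} choice of integers $\seq{i}$, in particular the ones equal to the values of $\seq{v}$.
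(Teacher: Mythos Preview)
Your overall induction scheme matches the paper's, but there is a genuine gap in how you handle the definition component \(\Def'\).

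You propose to take \(\Def' := \Def\) and then appeal to a ``monotonicity/weakening helper lemma for \(\Rightarrow\) under \(\subdef\)-enlargement'' to obtain \(\env;\chenv \vdash P' \Rightarrow \prog{\Def}{\Exp'}\). That lemma is false: the transformation relation in Figure~\ref{fig:program_transformation} is syntax-directed and \emph{functional} in its output---given \(\env,\chenv,P\) there is exactly one \(\prog{\Def}{\Exp}\) derivable. There is no rule that lets you inflate \(\Def\). Concretely, in \rn{R-Comm} the translation of \(P' = [\seq{i}/\seq{y},\seq{w}/\seq{z}]P_1 \PAR P_2\) is forced to be \(\prog{[\seq{i}/\seq{y}]\Def_1 \mrg \Def_2}{[\seq{i}/\seq{y}]\Exp_1 \nondet \Exp_2}\), and no derivation yields \(\prog{(\ndlet{y}{\Def_1})\mrg \Def_2}{\cdots}\). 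The paper's fix is the opposite of what you attempt: take \(\Def'\) to be the \emph{actual} translation of \(P'\) (namely \([\seq{i}/\seq{y}]\Def_1 \mrg \Def_2\) for \rn{R-Comm}, and \(\{f_\reg(\seq{y})=\Exp_1\}\mrg(\ndlet{y}{\Def_1})\mrg[\seq{i}/\seq{y}]\Def_1\mrg\Def_2\) for \rn{R-RComm}), and then verify \(\Def \subdef \Def'\) directly via \rn{D-ND}, \rn{D-Mrg}, and (for \rn{R-RComm}) idempotence of \(\mrg\). The non-standard reduction \((\Def',\Exp)\nsred^+(\Def',\Exp')\) is then carried out with respect to this \(\Def'\), not \(\Def\).

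Two smaller points. In \rn{R-Comm} you never say what happens to the leftover \(\Fname{\reg}(\seq{v})\) summand; the paper discharges it by calling the trivial clause \(f_\reg(\seq{y})=\skipexp\) (present in \(\Def'\) because \rn{SX-Nil} inserts it) and then absorbing \(\skipexp\) via \(\expequiv\). And your worry about \(\seq{v}\) versus \(\seq{i}\) is a non-issue: integers are simple expressions, so just invoke Lemma~\ref{lem:subst} with \(\seq{i}\) directly, as the paper does.
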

\begin{proof}
    By induction on the construction of \( P \red P' \).
    We only give detailed proofs for  interesting cases; the other cases are sketched.

\begin{description}
    \item[Case \rn{R-Comm}:]
    In this case $P \to P'$ must be of the form
    \begin{align*}
    \inexp{x}{\seq{y}}{\seq{z}}P_1 \PAR \outexp{x}{\seq{v}}{\seq{w}}P_2 \red [\seq{i}/ \seq{y} ,\seq{w} / \seq{z}]P_1 \PAR P_2,
    \end{align*}
    where
    $\len{\seq{y}} = \len{\seq{v}}$,
    $\len{\seq{z}} = \len{\seq{w}}$
    and $\seq{v} \Downarrow \seq{i}$.
    Also $\env; \chenv \vdash P \Rightarrow \prog{\Def}{\Exp}$ must be the form of
    \begin{align*}
      \env; \chenv \vdash P \Rightarrow
      \prog{(\ndlet{y}{\Def_1}) \mrg \Def_2}{(\ndlet{y}{\Exp_1}) \nondet (f_\reg(\seq{v}) \oplus \Exp_2)},
    \end{align*}
    where
    \begin{align}
    &\env; \chenv \vdash x : \Chty{\reg}{\seq{\ty}}{\seq{\chty}}
    \qquad \env; \chenv \vdash \seq{v} : \seq{\ty}
    \qquad \env; \chenv \vdash \seq{w} : \seq{\chty} \nonumber \\
    &\env, \seq{y} : \seq{\ty}; \chenv, \seq{z} : \seq{\chty} \vdash P_1 \Rightarrow
    \prog{\Def_1}{\Exp_1} \label{eq:sim:com:trans-p1} \\
    & \env; \chenv \vdash P_2 \Rightarrow \prog{\Def_2}{\Exp_2}. \label{eq:sim:com:trans-p2}
    \end{align}
    By applying Lemma~\ref{lem:subst} to \eqref{eq:sim:com:trans-p1} with
    $\env; \chenv \vdash \seq{i} : \seq{\ty}$,
    $\env; \chenv \vdash \seq{w} : \seq{\chty}$,
    we obtain
    $\env; \chenv \vdash [\seq{i} / \seq{y}, \seq{w})/\seq{z}]P_1 \Rightarrow
    \prog{[\seq{i}/\seq{y}]\Def_1}{[\seq{i}/\seq{y}]\Exp_1}$.
    From this and \eqref{eq:sim:com:trans-p2}, we have
    \begin{align*}
      \env; \chenv \vdash [\seq{i}/ \seq{y}, \seq{w}/ \seq{z}]P_1 \PAR P_2 \Rightarrow
      \prog{[\seq{i}/\seq{y}]\Def_1 \mrg \Def_2}
           {[\seq{i}/\seq{y}]\Exp_1 \nondet \Exp_2}
    \end{align*}
    by applying the rule \rn{SX-Par}.
    Observe that we also have
    \begin{align*}
      \Def = (\ndlet{y}{\Def_1}) \mrg \Def_2 \subdef [\seq{i}/\seq{y}]\Def_1 \mrg \Def_2.
    \end{align*}
    Therefore, for \( (\Def', \Exp')\) we can take \( ([\seq{i}/\seq{y}]\Def_1 \mrg \Def_2, [\seq{i}/\seq{y}]\Exp_1 \nondet \Exp_2 )\) with the following matching reduction sequence:
    \begin{align*}
      \Exp
      &=  (\ndlet{y}{\Exp_1}) \nondet f_\reg(\seq{v}) \nondet \Exp_2 \\
      &\nsred_{\Def'} [\seq{i}/\seq{y}]\Exp_1 \nondet f_\reg(\tilde{v}) \oplus E_2 \tag{\rn{SR-LetND}} \\
        &\nsred_{\Def'}, [\seq{i}/\seq{y}]\Exp_1 \nondet \skipexp \nondet \Exp_2 \tag{by (\rn{SR-App}) and \( \lambda \seq{y}.\skipexp \in \Def'(f_\reg) \) } \\
        &\expequiv [\seq{i}/\seq{y}]\Exp_1 \nondet \Exp_2.
    \end{align*}

    \item[Case \rn{R-RComm}:]
    In this case  $P \to P'$ is of the form
    \begin{align*}
      \rinexp{x}{\seq{y}}{\seq{z}}P_1 \PAR \outexp{x}{\seq{v}}{\seq{w}}P_2 \red \rinexp{x}{\seq{y}}{\seq{z}}P_1 \PAR [(\seq{i},\seq{w})/(\seq{y},\seq{z})]P_1 \PAR P_2,
    \end{align*}
    where
    $\len{\seq{y}} = \len{\seq{v}}$,
    $\len{\seq{z}} = \len{\seq{w}}$
    and $\seq{v} \Downarrow \seq{i}$.
    Moreover, the judgment $\env; \chenv \vdash P \Rightarrow
    \prog{\Def}{\Exp}$ must be of the form
    \begin{align*}
      \env; \chenv \vdash P \Rightarrow
      \prog{\{ \fdef{f_\reg}{\seq{y}}{\Exp_1} \} \mrg (\ndlet{y}{\Def_1}) \mrg \Def_2}
      {\skipexp \nondet f_\reg (\seq{v}) \nondet \Exp_2},
    \end{align*}
    where
    \begin{align}
      &\env; \chenv \vdash x : \Chty{\reg}{\seq{\ty}}{\seq{\chty}}
      \qquad \env; \chenv \vdash \seq{v} : \seq{\ty}
      \qquad \env; \chenv \vdash \seq{w} : \seq{\chty} \nonumber \\
      &\env; \chenv \vdash \rinexp{x}{\seq{y}}{\seq{z}}P_1 \Rightarrow
      \prog{\{ \fdef{f_\reg}{\seq{y}}{\Exp_1} \} \mrg (\ndlet{y}{\Def_1})}{
        \skipexp} \label{eq:sim:rcom:trans-in} \\
      &\env, \seq{y} : \seq{\ty}; \chenv, \seq{z} : \seq{\chty} \vdash P_1 \Rightarrow \Def_1; \Exp_1 \label{eq:sim:rcom:trans-p1}\\
      &\env; \chenv \vdash P_2 \Rightarrow \prog{\Def_2}{\Exp_2}. \label{eq:sim:rcom:trans-p2}
    \end{align}
    Since
    $\env; \chenv \vdash \seq{i} : \tilde{\ty}$ and
    $\env; \chenv \vdash \tilde{w} : \tilde{\chty}$, we can apply the substitution lemma (Lemma~\ref{lem:subst}) to \eqref{eq:sim:rcom:trans-p1} and obtain
    \begin{align*}
      \env; \chenv \vdash [\seq{i} / \seq{y}, \seq{w}/\seq{z}]P_1 \Rightarrow
          \prog{[\seq{i}/\seq{y}]\Def_1}{[\seq{i}/\seq{y}]\Exp_1}.
    \end{align*}
    From this, \eqref{eq:sim:rcom:trans-in} and \eqref{eq:sim:rcom:trans-p2}, we have
    \begin{align*}
      \env; \chenv \vdash P' \Rightarrow
      \begin{aligned}
        &(\{ \fdef{f_\reg}{\seq{y}}{\Exp_1} \} \mrg (\ndlet{y}{\Def_1}) \mrg  [\seq{i}/\seq{y}]\Def_1 \mrg \Def_2, \\
        &\skipexp \nondet [\seq{i}/\seq{y}]\Exp_1 \nondet \Exp_2)
      \end{aligned}
    \end{align*}
    So we can take \( \{ \fdef{f_\reg}{\seq{y}}{\Exp_1} \} \mrg (\ndlet{y}{\Def_1}) \mrg  [\seq{i}/\seq{y}]\Def_1 \mrg \Def_2 \) as  \( \Def' \) and \( \skipexp \nondet [\seq{i}/\seq{y}]\Exp_1 \nondet \Exp_2 \) as \( \Exp' \).
    Now it remains to show that \( \Def \subdef \Def' \) and that there is a reduction sequence from \( (\Def', \Exp) \) to \( (\Def', \Exp') \).
    The relation \( \Def \subdef \Def' \) holds because
    \begin{align*}
      \Def
      &= (\{ \fdef{f_\reg}{\seq{y}}{\Exp_1} \} \mrg (\ndlet{y}{\Def_1}) \mrg \Def_2 \\
      &= \{ \fdef{f_\reg}{\seq{y}}{\Exp_1} \} \mrg (\ndlet{y}{\Def_1}) \mrg (\ndlet{y}{\Def_1}) \mrg \Def_2 \\
      &\subdef \{ \fdef{f_\reg}{\seq{y}}{\Exp_1} \} \mrg (\ndlet{y}{\Def_1}) \mrg [\seq{i} / \seq{y}]{\Def_1} \mrg \Def_2 \tag{\rn{D-ND}} \\
      &=\Def'
    \end{align*}

    Finally, by \rn{SR-App}, we obtain
    \begin{align*}
      \Exp &= \skipexp \nondet f_\reg (\seq{v}) \nondet \Exp_2  \nsred_{\Def'} \skipexp \nondet [\seq{i}/\seq{y}]\Exp_1 \nondet \Exp_2  = \Exp'
    \end{align*}
    as desired.
    \item[Case \rn{R-If-T}:]
      In this case \( P \red P' \) and \( \env; \chenv \vdash P \Rightarrow
      \prog{\Def}{\Exp} \) must be of the form
      \begin{align*}
      &\ifexp{v}{P_1}{P_2} \red P_1 \\
        &\env; \chenv \vdash \ifexp{v} {P_1}{P_2} \Rightarrow
        \prog{\Def_1 \mrg \Def_2}{\ifexp{v}{\Exp_1}{\Exp_2}}
      \end{align*}
      where
      \begin{align*}
        v \Downarrow i \neq 0  \qquad \env; \chenv \vdash v : \ty \\
        \env; \chenv \vdash P_1 \Rightarrow \prog{\Def_1}{\Exp_1} \\
        \env; \chenv \vdash P_2 \Rightarrow \prog{\Def_2}{\Exp_2}.
      \end{align*}
    We can take \( (\Def_1, \Exp_1) \) for \( ( \Def', \Exp' )\) because \( \Def_1 \mrg \Def_2 \subdef \Def_1 \), and \( \Exp \nsred_{\Def_1} \Exp_1 \), which is trivial from \rn{SR-If-T}.

    \item[Case \rn{R-If-F}:]
    Similar to the previous case.

    \item[Case \rn{R-Cong}:]
    In this case \( P \red P' \) must be of the form
    \begin{align*}
      P \piequiv P_1 \red P_1' \piequiv P'.
    \end{align*}
    By Lemma~\ref{lem:cong}, we have
    \begin{align*}
      \env, \chenv \vdash P_1 \Rightarrow \prog{\Def}{\Exp_1} \text{ and } \Exp \expequiv \Exp_1
    \end{align*}
    for some \( \Exp_1 \).
    Thus, by the induction hypothesis, we have
    \begin{align}
      &\env, \chenv \vdash P_1' \Rightarrow \prog{\Def'}{\Exp_1'} \label{eq:sim:cong-P1prime}\\
      & (\Def',  \Exp_1) \nsred^+ (\Def', \Exp_1') \label{eq:sim:cong:red-seq}
    \end{align}
    where \( \Def \subdef \Def' \).
    By applying Lemma~\ref{lem:cong} to \eqref{eq:sim:cong-P1prime}, we obtain
    \begin{align*}
      \env, \chenv \vdash P' \Rightarrow \prog{\Def'}{\Exp'} \text{ and } \Exp_1' \expequiv \Exp'
    \end{align*}
    for some \( \Exp' \).
    It remains to show that \( (\Def', \Exp) \nsred^+  (\Def', \Exp') \), but this is easily shown by repeatedly applying the rule \rn{SR-Cong} along the reduction sequence \eqref{eq:sim:cong:red-seq}.

    \item[Case \rn{R-Par}, \rn{R-Nu} and \rn{R-LetND}:]
    Similar to the previous case, i.e.~follows from the definition of the translation and the induction hypothesis together with Lemma~\ref{lem:subdef}.
\end{description}

\leavevmode\qed
\end{proof}

\begin{lemma}  \label{lem:infinitechain}
    Suppose that \( \emptyset; \emptyset \vdash P \Rightarrow \prog{\Def}{\Exp} \).
    If \( (\mathcal{D}, E) \) is terminating with respect to $\nsred$, then \( P \) is terminating.
\end{lemma}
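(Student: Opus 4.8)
The plan is to prove the contrapositive: assuming $P$ is non-terminating, I will exhibit an infinite $\nsred$-reduction sequence starting from $(\Def,\Exp)$, which contradicts the hypothesis that $(\Def,\Exp)$ is terminating with respect to $\nsred$.

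First I would fix an infinite reduction sequence $P = P_0 \red P_1 \red P_2 \red \cdots$ and construct, by induction on $i$, a family of sequential programs $(\Def_i,\Exp_i)$ together with derivations $\emptyset; \emptyset \vdash P_i \Rightarrow \prog{\Def_i}{\Exp_i}$ such that $\Def_i \subdef \Def_{i+1}$ and $(\Def_{i+1},\Exp_i) \nsred^+ (\Def_{i+1},\Exp_{i+1})$. The base case takes $(\Def_0,\Exp_0) := (\Def,\Exp)$, which is justified by the hypothesis $\emptyset;\emptyset \vdash P \Rightarrow \prog{\Def}{\Exp}$. The inductive step is an immediate application of the simulation lemma (Lemma~\ref{lem:simulate}) to the reduction $P_i \red P_{i+1}$ and the derivation $\emptyset;\emptyset \vdash P_i \Rightarrow \prog{\Def_i}{\Exp_i}$.

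Next I would observe that, by repeated use of transitivity of $\subdef$ (rule \rn{D-Trns}, with \rn{D-Id} as the base), we get $\Def = \Def_0 \subdef \Def_i$ for every $i$. Hence, for each $i$, applying Lemma~\ref{lem:subdef} to every single step of the reduction $(\Def_{i+1},\Exp_i) \nsred^+ (\Def_{i+1},\Exp_{i+1})$ and concatenating the results yields $(\Def,\Exp_i) \nsred^+ (\Def,\Exp_{i+1})$. Splicing these segments for $i = 0, 1, 2, \ldots$ — each of which contains at least one step, since $\nsred^+$ denotes the transitive (not reflexive-transitive) closure — produces the infinite $\nsred$-reduction sequence $(\Def,\Exp) = (\Def,\Exp_0) \nsred^+ (\Def,\Exp_1) \nsred^+ (\Def,\Exp_2) \nsred^+ \cdots$, contradicting termination of $(\Def,\Exp)$ with respect to $\nsred$.

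The only delicate point is the bookkeeping around the fact that the function-definition component grows along the simulation: the simulation lemma produces a progressively larger $\Def_i$ at each process step, so no single $\Def_i$ simulates the whole run uniformly. This is exactly what the preorder $\subdef$ and Lemma~\ref{lem:subdef} were introduced for — the initial (smallest) set of definitions $\Def$ can simulate every $\Def_i$ — so once those facts are invoked the argument is essentially a routine induction followed by concatenation. I expect no other genuine obstacle here; the harder combinatorial content (recovering a standard reduction sequence from a non-standard one via König's Lemma) is isolated in the subsequent lemma and is not needed for this statement.
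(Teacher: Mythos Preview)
Your proposal is correct and follows essentially the same approach as the paper's proof: both argue by contraposition, use Lemma~\ref{lem:simulate} to build the sequence $(\Def_i,\Exp_i)$, and then invoke Lemma~\ref{lem:subdef} together with $\Def \subdef \Def_i$ to pull the whole infinite $\nsred$-run back to the fixed definition set $\Def$. Your write-up is slightly more explicit than the paper's (spelling out the use of \rn{D-Trns} for $\Def \subdef \Def_i$ and the step-by-step application of Lemma~\ref{lem:subdef}), but the argument is the same.
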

\begin{proof}
    We show the contraposition.
    Assume that \( P \) is not terminating, i.e.~assume that there exists an infinite reduction sequence $P = P_0 \red P_1 \red \cdots$.
    Let \( \Def_0 = \Def \) and \( \Exp_0 = \Exp \).
    By applying Lemma~\ref{lem:simulate}, for each natural number \( k \ge 1 \), we obtain \( \Def_k \), \( \Exp_k \) such that \( \emptyset; \emptyset \vdash P_k \Rightarrow
    \prog{\Def_k}{\Exp_k} \), \( (\Def_{k}, \Exp_{k-1}) \nsred^+ (\Def_{k}, \Exp_{k}) \) and \( \Def \subdef \Def_k \).
    Hence, by Lemma~\ref{lem:subdef} there exists an infinite reduction sequence
    $(\Def, \Exp) = (\Def, \Exp_0)\allowbreak \nsred^+ (\Def, \Exp_1) \nsred^+ \cdots$.
    \qed
\end{proof}

We now show the relation between standard and non-standard reductions.
\begin{lemma}  \label{lem:konig}
    Assume that $\emptyset; \emptyset \vdash P \Rightarrow \prog{\Def}{\Exp}$.
    If \( (\Def, \Exp) \) is terminating with respect to the standard reduction \( \sred \), then \( (\Def, \Exp)\) is also terminating with respect to the non-standard reduction relation \( \nsred \).
\end{lemma}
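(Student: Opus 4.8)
The plan is to prove the contrapositive: if $(\Def,\Exp)$ admits an infinite non-standard reduction sequence $\Exp = E_0 \nsred E_1 \nsred \cdots$, then it admits an infinite standard one, contradicting termination with respect to $\sred$. Following the sketch given after Theorem~\ref{thm:soundness}, the idea is to reorganize the infinite $\nsred$-sequence as an infinite, finitely branching tree whose branches record the non-deterministic choices $\nondet$ and whose paths are standard reduction sequences; an infinite path then exists by K\"onig's Lemma.

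First I would introduce, for an expression $E$, its finite multiset $\mathrm{At}(E)$ of \emph{atoms}: write $E$ in $\expequiv$-normal form $a_1 \nondet \cdots \nondet a_n$ (with the convention $\skipexp$ when $n = 0$), where no $a_i$ is a choice or $\skipexp$; then $\mathrm{At}(E) = \{a_1,\dots,a_n\}$. Since $\expequiv$ is generated by associativity, commutativity and the $\skipexp$-unit law for $\nondet$, the multiset $\mathrm{At}(E)$ is $\expequiv$-invariant, and each $a_i$ is literally a subterm of $E$ that $E$ can reach by finitely many $\rn{SR-Cho-L}/\rn{SR-Cho-R}$ steps (at each $\nondet$-node, descend toward the child containing $a_i$). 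The main structural step is then a lemma: every step $(\Def,E)\nsred(\Def,E')$ applies one of the base rules $\rn{SR-LetND}$, $\rn{SR-App}$, $\rn{SR-If-T}$, $\rn{SR-If-F}$, $\rn{SR-Ass-T}$, $\rn{SR-Ass-F}$ to a single atom $r \in \mathrm{At}(E)$ underneath a choice context, so that $\mathrm{At}(E') = (\mathrm{At}(E)\setminus\{r\}) \uplus \mathrm{At}(r')$ where $r'$ is the base reduct of $r$; this is shown by induction on the $\nsred$-derivation, peeling off $\rn{SR-Cong}$ and $\rn{SR-ChoBody-L/R}$. Moreover $\mathrm{At}(r')$ is finite, and $\lvert\mathrm{At}(r')\rvert$ is bounded by a constant $N$ depending only on $(\Def,\Exp)$: for $\rn{SR-App}$ it is the number of atoms of the relevant function body, and substituting simple expressions for variables creates no new $\nondet$; finiteness of $\Def$ and of every body follows from the translation, as guaranteed by the hypothesis $\emptyset;\emptyset\vdash P \Rightarrow \prog{\Def}{\Exp}$.

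Given the infinite sequence $E_0 \nsred E_1 \nsred \cdots$, I would build a tree $T$: add a virtual root whose children are (occurrences of) the atoms of $E_0$, and whenever the step $E_j \nsred E_{j+1}$ reduces an atom $r$ with base reduct $r'$, attach the atoms of $r'$ as the children of the node $r$. By the structural lemma the frontier of $T$ after step $j$ is exactly $\mathrm{At}(E_j)$, and a node, once reduced, leaves the frontier forever, so distinct reduction steps reduce distinct nodes; hence $T$ has at least as many nodes as there are steps and is infinite, while it is finitely branching (branching $\le N$). K\"onig's Lemma yields an infinite path $a_0, a_1, a_2, \dots$ with $a_0 \in \mathrm{At}(\Exp)$ and $a_{k+1} \in \mathrm{At}(a_k')$, where $a_k'$ is the base reduct used when $a_k$ was reduced. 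This path induces an infinite standard reduction sequence: from $\Exp$, use $\rn{SR-Cho-L}/\rn{SR-Cho-R}$ to reach the atom $a_0$; since $a_0$ is a let/call/if/assume expression, the very base rule that fired on it in the $\nsred$-sequence also applies under $\sred$ (its premises do not mention the surrounding context, and $\Def$ is unchanged), giving $a_0 \sred a_0'$; then navigate by $\rn{SR-Cho}$-steps from $a_0'$ to $a_1$, fire its base rule, and repeat. As the path is infinite, so is this $\sred$-sequence, which is the desired contradiction.

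The main obstacle will be the bookkeeping needed to reconcile the two semantics: $\nsred$ enjoys the congruence rule $\rn{SR-Cong}$ and the contextual rules $\rn{SR-ChoBody-L/R}$, whereas $\sred$ has neither, so one must check carefully that each atom sits as a genuine $\nondet$-subtree that can be isolated purely by $\rn{SR-Cho-L}/\rn{SR-Cho-R}$, and that the base reduct of an atom again has this property, so that the construction iterates. A second delicate point is that $T$ is infinite, which relies on distinct $\nsred$-steps reducing distinct nodes; this is why one must track atom occurrences rather than raw expressions, since the same expression may occur as several atoms. Once these are in place, combining this lemma with Lemma~\ref{lem:infinitechain} yields Theorem~\ref{thm:soundness}.
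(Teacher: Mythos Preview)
Your argument is correct and follows the same high-level idea as the paper: take the contrapositive, organise the infinite $\nsred$-sequence as a finitely branching infinite tree, and invoke K\"onig's Lemma to extract an infinite $\sred$-path. The decomposition, however, differs in a way worth noting.

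The paper first strips out the structural rule \textsc{SR-Cong} by introducing a position-annotated relation $(\Def,E)\nsredv{\gamma}(\Def,E')$ with $\gamma\in\{1,2\}^*$ (no congruence allowed), observing that an infinite $\nsred$-run yields an infinite $\nsredv{\gamma}$-run. K\"onig is then applied not to a tree of atom occurrences but to the binary tree generated by the set of positions $\{\gamma_i\}$, producing an infinite $\preceq$-chain $\gamma_{i_1}\preceq\gamma_{i_2}\preceq\cdots$; a separate lemma relating $\nsredv{\gamma}$ to $\sred$ via the projection $E\!\downarrow_\gamma$ then converts this chain into the desired $\sred$-sequence. Your route instead keeps \textsc{SR-Cong} and works modulo $\expequiv$ throughout, tracking the multiset of atoms and building the ``process tree'' of atom reductions directly. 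This is arguably more intuitive and avoids the auxiliary annotated relation, but the price is the bookkeeping you flag at the end: because $\expequiv$ is a \emph{congruence}, it can rewrite inside atoms, so $\mathrm{At}(E)$ is only invariant up to $\expequiv$ of the atoms themselves, and one must be a bit careful when stating the structural lemma. The paper's detour through $\nsredv{\gamma}$ buys a cleaner treatment of exactly that point, at the cost of one extra definition. Either way the K\"onig step goes through, and your observation that finite branching (rather than a uniform bound $N$) already suffices is correct.
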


To prove the lemma above, we introduce a slight variation of the
non-standard reduction relation:
\((\Def,\Exp) \nsredv{\gamma} (\Def',\Exp')\) where
\(\gamma\in \set{1,2}^*\). (Actually, \(\Def\) does not change during the reduction.)
It is defined by the rules in Figure~\ref{fig:nsredv}.

\begin{figure}[tbp]

  \infrule[NSR-LetND]
          {\len{\seq{x}} = \len{\seq{i}}}
          {(\Def, \ndlet{x}{\Exp} )  \nsredv{\epsilon} (\Def, [\seq{i}/\seq{x}]\Exp)}
\vspace*{1ex}
          \infrule[NSR-App]
{ (\lambda \seq{y}.\Exp) \in  \Def(f)\andalso
 \len{\seq{y}} = \len{\seq{v}}\andalso
 \seq{v} \Downarrow \seq{i}}
{(\Def, f(\seq{v})) \nsredv{\epsilon} (\Def, [\seq{i}/\tilde{y}] \Exp)}

\vspace*{1ex}
\infrule[NSR-If-T]{v \Downarrow i \andalso i \neq 0}
 {(\Def, \ifexp{v}{\Exp_1}{\Exp_2})  \nsredv{\epsilon} (\Def, \Exp_1)}
\vspace*{1ex}
\infrule[NSR-If-F]{v \Downarrow 0}
 {(\Def, \ifexp{v}{\Exp_1}{\Exp_2})  \nsredv{\epsilon} (\Def, \Exp_2)}
\vspace*{1ex}

 \infrule[NSR-ChoBody-L]
 {(\Def, \Exp_1) \nsredv{\gamma} (\Def, \Exp_1')}
 {(\Def, \Exp_1 \nondet \Exp_2) \nsredv{1\cdot \gamma} (\Def, \Exp_1' \nondet \Exp_2)}
\vspace*{1ex}
 \infrule[NSR-ChoBody-R]
 {(\Def, \Exp_2) \nsredv{\gamma} (\Def, \Exp_2')}
 {(\Def, \Exp_1 \nondet \Exp_2) \nsredv{2\cdot \gamma} (\Def, \Exp_1 \nondet \Exp_2')}

\vspace*{1ex}
 \infrule[NSR-Ass-T]
  {v \Downarrow i\andalso i \neq 0}
  {(\Def, \textbf{Assume}(v);E) \nsredv{\epsilon} (\Def, \Exp)}
\vspace*{1ex}
 \infrule[NSR-Ass-F]
  {v \Downarrow 0}
  {(\Def, \textbf{Assume}(v);E) \nsredv{\epsilon} (\Def, \skipexp)}

\caption{A variation of the non-standard reduction relation}
\label{fig:nsredv}
\end{figure}

The only differences of
\((\Def,\Exp) \nsredv{\gamma} (\Def',\Exp')\) from
\((\Def,\Exp) \nsred (\Def',\Exp')\) are that
the reduction is annotated with the position \(\gamma\) that indicates where the reduction occurs,
and that the rule \rn{SR-Cong} for shuffling expressions is forbidden.
Since the rule \rn{SR-Cong} does not affect the reducibility, we can easily
observe the following property. (We omit the proof since it is trivial.)
\begin{lemma}
  \label{lem:nsred-vs-nsredv}
  If \((\Def,\Exp)\) has an infinite reduction sequence with respect to \(\nsred\),
  \((\Def,\Exp)\) has an infinite reduction sequence also with respect to \(\nsredv{\gamma}\).
\end{lemma}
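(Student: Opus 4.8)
The plan is to observe that $\nsred$ and $\nsredv{\gamma}$ (with its position annotation forgotten) are \emph{the same relation modulo $\expequiv$}: the only rule of $\nsred$ that $\nsredv{\gamma}$ lacks is the administrative congruence rule \rn{SR-Cong}, which merely rearranges the $\oplus$-skeleton of an expression and performs no real computation. Once this is made precise, an infinite $\nsred$-sequence can be turned into an infinite $\nsredv{\gamma}$-sequence by a step-by-step construction, with no further combinatorics.

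Concretely, I would first prove two facts. (A) Every step $\Exp \nsred \Exp'$ factors as $\Exp \expequiv \hat{\Exp} \nsredv{\gamma} \hat{\Exp}' \expequiv \Exp'$ for some $\gamma$. This is a routine induction on the derivation of the step: a base case (\rn{SR-App}, \rn{SR-If-T}, \rn{SR-LetND}, \rn{SR-Ass-T}, etc.) is already a $\nsredv{\epsilon}$-step; \rn{SR-ChoBody-L} and \rn{SR-ChoBody-R} prepend $1$ or $2$ to the position obtained from the premise, using that $\expequiv$ is a congruence; and \rn{SR-Cong} is absorbed by transitivity of $\expequiv$. (B) A commutation property: if $\Exp \expequiv \hat{\Exp}$ and $\hat{\Exp} \nsredv{\gamma} \hat{\Exp}'$, then $\Exp \nsredv{\gamma'} \Exp'$ for some $\gamma'$ and $\Exp'$ with $\Exp' \expequiv \hat{\Exp}'$. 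Granted (A) and (B), I would construct the infinite $\nsredv{\gamma}$-sequence by recursion on the given sequence $\Exp_0 \nsred \Exp_1 \nsred \cdots$: maintaining the invariant that $\Exp_0 \nsredv{\gamma_1} \cdots \nsredv{\gamma_n} \Exp_n^{\sharp}$ with $\Exp_n^{\sharp} \expequiv \Exp_n$, I factor $\Exp_n \nsred \Exp_{n+1}$ via (A) and then push the leading $\expequiv$ (which now links $\Exp_n^{\sharp}$ to the intermediate $\hat{\Exp}$) past the $\nsredv{\gamma}$-step via (B), extending the sequence to $\Exp_{n+1}^{\sharp}$ with $\Exp_{n+1}^{\sharp} \expequiv \Exp_{n+1}$. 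This defines an infinite $\nsredv{\gamma}$-sequence from $\Exp_0$; note that no K\"onig-style argument is needed here, since the sequence is built one step at a time.

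The only delicate point — and hence the main obstacle, such as it is — is the commutation property (B). One must check that a redex occurring as a non-$\oplus$ leaf of $\hat{\Exp}$ still occurs as a leaf of every $\expequiv$-equivalent $\Exp$ (possibly at a different position), and that replacing that leaf by its reduct commutes with $\expequiv$. This needs a little care because $\expequiv$ includes the unit law $\Exp \oplus \skipexp \expequiv \Exp$, so the multiset of $\oplus$-leaves is preserved only up to $\skipexp$-leaves; the cleanest route is an induction on the derivation of $\Exp \expequiv \hat{\Exp}$, checking for each defining equation of $\expequiv$ that a $\nsredv{\gamma}$-step on one side is mirrored by a $\nsredv{\gamma'}$-step on the other with $\expequiv$-related results, and using the orthogonality that $\nsredv{\gamma}$ rewrites a single non-$\oplus$ subterm while $\expequiv$ only reshuffles the surrounding $\oplus$-structure. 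The non-determinism of \rn{SR-LetND} and \rn{SR-App} causes no trouble: when mirroring a step one simply reuses the same instantiated integers, respectively the same clause of $\Def(f)$.
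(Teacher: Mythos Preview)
Your proposal is correct and is precisely the kind of argument the paper elides: the paper simply remarks that the only difference between $\nsred$ and $\nsredv{\gamma}$ is the administrative rule \rn{SR-Cong}, declares that this rule ``does not affect the reducibility,'' and omits the proof as trivial. Your factorization (A) and commutation (B) lemmas make this claim rigorous, and your step-by-step construction of the $\nsredv{\gamma}$-sequence from $\Exp_0$ while maintaining the invariant $\Exp_n^{\sharp} \expequiv \Exp_n$ is the natural way to thread the argument; your identification of the unit law $\Exp \oplus \skipexp \expequiv \Exp$ as the only mildly delicate point in (B) is also accurate.
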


It remains to show that
if \((\Def,\Exp)\) has an infinite reduction sequence
\[(\Def,\Exp)\nsredv{\gamma_1} (\Def,\Exp_1)\nsredv{\gamma_2}
(\Def,\Exp_2)\nsredv{\gamma_3}(\Def,\Exp_3)\nsredv{\gamma_4}\cdots,\]
then
\((\Def,\Exp)\) has an infinite reduction sequence also with respect to \(\sred\).

We write \(\gamma \preceq \gamma'\) if \(\gamma\) is a prefix of \(\gamma'\).
We have the following property.
\begin{lemma}
  \label{lem:inf-nsredv}
  If
\[(\Def,\Exp)\nsredv{\gamma_1} (\Def,\Exp_1)\nsredv{\gamma_2}
(\Def,\Exp_2)\nsredv{\gamma_3}(\Def,\Exp_3)\nsredv{\gamma_4}\cdots,\]
then there exists an infinite sequence
\(i_1 < i_2 < i_3< \cdots\)
such that \(\gamma_{i_j} \preceq \gamma_{i_k}\) for any \(j<k\).
\end{lemma}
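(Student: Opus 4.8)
The plan is to follow the hint made after the lemma and reduce the statement to K\"onig's Lemma, by regarding an infinite reduction sequence as building up an infinite tree of non-deterministic choices. For an expression \(\Exp\), define its \emph{\(\nondet\)-tree}: if \(\Exp\) has the shape \(\Exp_1 \nondet \Exp_2\), the tree has a root \(\epsilon\) whose left subtree is the \(\nondet\)-tree of \(\Exp_1\) with every position prefixed by \(1\) and whose right subtree is that of \(\Exp_2\) with every position prefixed by \(2\); otherwise the tree is a single leaf \(\epsilon\). Write \(\mathcal{T}_k\) for the \(\nondet\)-tree of \(\Exp_k\), with \(\Exp_0 := \Exp\). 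Inspecting \rn{NSR-ChoBody-L}, \rn{NSR-ChoBody-R} and the base rules of Figure~\ref{fig:nsredv}, one sees that \(\gamma_k\) is precisely the path in \(\mathcal{T}_{k-1}\) from the root to the leaf carrying the contracted redex (which by the shape of the base rules is never itself a choice \(\nondet\)); in particular every \(\gamma_k\) is a leaf, hence a node, of \(\mathcal{T}_{k-1}\).

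Next I would establish monotonicity of the trees: the node set of \(\mathcal{T}_{k-1}\) is contained in that of \(\mathcal{T}_k\), and every node of \(\mathcal{T}_k\) not already present in \(\mathcal{T}_{k-1}\) is a proper extension of \(\gamma_k\). Indeed, the step \(\Exp_{k-1} \nsredv{\gamma_k} \Exp_k\) rewrites only the subexpression sitting at position \(\gamma_k\) and leaves the surrounding \(\nondet\)-structure (in particular all ancestors of \(\gamma_k\)) untouched; and — this is the crucial point, and exactly why the non-standard relation omits \rn{SR-Cong} — there is no way to collapse a choice node, since \(\Exp \nondet \skipexp \expequiv \Exp\) is not available, so no node is ever deleted. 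Consequently \(\mathcal{T}_\infty := \bigcup_k \mathcal{T}_k\) is a well-defined, finitely branching tree (a subtree of the complete binary tree) that contains every \(\gamma_k\).

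Finally I would split on whether \(\mathcal{T}_\infty\) is finite. If it is, all the \(\gamma_k\) range over a finite set, so by the pigeonhole principle some \(\gamma\) satisfies \(\gamma = \gamma_k\) for infinitely many \(k\); listing those indices as \(i_1 < i_2 < \cdots\) gives the claim, since all these positions coincide. If \(\mathcal{T}_\infty\) is infinite, K\"onig's Lemma yields an infinite branch \(\epsilon = \delta_0 \prec \delta_1 \prec \delta_2 \prec \cdots\) with \(|\delta_n| = n\). Since each \(\mathcal{T}_k\) is finite, the \(\delta_n\) become available only gradually: letting \(t_n\) be the step at which \(\delta_n\) first appears (\(t_n := 0\) if \(\delta_n \in \mathcal{T}_0\)), the set \(\{t_n : n \ge 0\}\) is unbounded, so there are infinitely many steps that create a new node of the branch; enumerate them \(k_1 < k_2 < \cdots\). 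Each such step \(k\) creates some \(\delta_n\) by turning a leaf at an ancestor position of \(\delta_n\) into an internal node; that ancestor position, being a prefix of \(\delta_n\), equals \(\delta_{e(k)}\) with \(e(k) := |\gamma_k|\), so \(\gamma_k = \delta_{e(k)}\) lies on the branch. Moreover, after step \(k_i\) the node \(\delta_{e(k_i)}\) and all its ancestors are internal, whereas step \(k_{i+1} > k_i\) contracts at a leaf; since that leaf has the form \(\delta_m\), it must satisfy \(m > e(k_i)\), i.e.\ \(e(k_1) < e(k_2) < \cdots\). Taking \(i_m := k_m\) then gives \(\gamma_{i_1} = \delta_{e(k_1)} \prec \delta_{e(k_2)} = \gamma_{i_2} \prec \cdots\), hence \(\gamma_{i_j} \preceq \gamma_{i_k}\) whenever \(j < k\).

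The argument has no deep obstacle; the work lies in the careful reading of the reduction rules in the first two paragraphs — justifying that positions name \(\nondet\)-tree leaves, that a reduction acts purely locally, and that dropping \rn{SR-Cong} is exactly what keeps the \(\nondet\)-trees monotone — after which the conclusion is just the pigeonhole principle together with K\"onig's Lemma applied to \(\mathcal{T}_\infty\).
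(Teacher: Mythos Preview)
Your argument is correct and follows the paper's strategy: a case split on finiteness, handled by pigeonhole in the finite case and K\"onig's Lemma in the infinite case. The paper applies K\"onig to the prefix closure of $\{\gamma_i\}$, whereas you apply it to $\mathcal{T}_\infty = \bigcup_k \mathcal{T}_k$; since every $\gamma_i$ is a node of $\mathcal{T}_{i-1}\subseteq\mathcal{T}_\infty$, the paper's tree sits inside yours and the two case splits coincide. The one substantive difference is that you actually justify the final step---extracting an infinite $\preceq$-chain of $\gamma_i$'s from the K\"onig path---by exploiting the monotonicity of the $\mathcal{T}_k$ and the fact that all new nodes lie strictly below the reduction position $\gamma_k$. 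The paper simply asserts this implication, but it fails for arbitrary position sequences (take $\gamma_k$ to be $k{-}1$ copies of $2$ followed by a single $1$: the prefix closure is infinite, yet its unique infinite branch $2,22,222,\ldots$ contains no $\gamma_i$), so the reduction structure is genuinely needed here, and your proof supplies what the paper leaves implicit.
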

\begin{proof}
  The required property obviously holds if   the set \(\set{\gamma_i\mid i\ge 1}\) is finite.
  So, assume that \(\set{\gamma_i\mid i\ge 1}\) is infinite.
  Let \(T\) be the least binary tree that contains, for every \(\gamma_i\),
  the node whose path from the root is \(\gamma_i\).
  By the assumption that \(\set{\gamma_i\mid i\ge 1}\) is infinite,
  \(T\) is an infinite tree. Thus, by K\"onig's lemma,
  \(T\) must have an infinite path, which implies that
  there exists an infinite sequence
  \[\gamma_{i_1} \preceq \gamma_{i_2} \preceq \gamma_{i_3} \preceq \cdots, \]
  as required. \qed
\end{proof}

For an expression \(\Exp\) and a position \(\gamma\in\set{1,2}^*\), we write
\(\Proj{\Exp}{\gamma}\) for the subexpression at \(\gamma\). It is inductively defined by:
\[
\begin{array}{l}
\Proj{\Exp}{\epsilon} = \Exp\\
\Proj{\Exp}{i\cdot \gamma} =
\left\{\begin{array}{ll}
  \Proj{\Exp_i}{\gamma} & \mbox{if $\Exp$ is of the form \(\Exp_1\nondet \Exp_2\)}\\
  \mbox{undefined}\hspace*{2em} & \mbox{otherwise}
\end{array}\right.
\end{array}
\]
The following lemma states the correspondence between \(\nsredv{\gamma}\) and \(\sred\).

\begin{lemma}
\label{lem:nsredv-vs-sred}
  \begin{enumerate}
\item  If \((\Def,\Exp)\nsredv{\gamma}(\Def,\Exp')\),
  then \((\Def, \Proj{\Exp}{\gamma})\sred (\Def,\Proj{\Exp'}{\gamma})\).
\item Suppose \(\Proj{\Exp}{\gamma'}\) is defined and \(\gamma'\not\preceq \gamma\).
  If \((\Def,\Exp)\nsredv{\gamma}(\Def,\Exp')\), then
  \(\Proj{\Exp}{\gamma'}=\Proj{\Exp'}{\gamma'}\).
\item
  If \((\Def,\Exp)\nsredv{\gamma}(\Def,\Exp')\), and \(\gamma'\preceq \gamma\),
  then \((\Def, \Proj{\Exp}{\gamma'} ) \sred^* (\Def, \Proj{\Exp}{\gamma})\).
\end{enumerate}  
\end{lemma}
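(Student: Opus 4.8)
The plan is to prove the three items simultaneously by induction on the derivation of \((\Def,\Exp)\nsredv{\gamma}(\Def,\Exp')\), exploiting the fact that such a derivation has a rigid shape: a (possibly empty) chain of \rn{NSR-ChoBody-L}/\rn{NSR-ChoBody-R} steps that descends into the \(\nondet\)-tree of \(\Exp\) and builds \(\gamma\) one symbol at a time, capped by a single ``root'' rule (\rn{NSR-LetND}, \rn{NSR-App}, \rn{NSR-If-T}, \rn{NSR-If-F}, \rn{NSR-Ass-T}, or \rn{NSR-Ass-F}), each of which is annotated by \(\epsilon\) and applies only to an expression that is \emph{not} of the form \(\Exp_1\nondet\Exp_2\). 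Along the way I will also record the auxiliary fact that \((\Def,\Exp)\nsredv{\gamma}(\Def,\Exp')\) entails that \(\Proj{\Exp}{\gamma}\) is defined (immediate in the base case, preserved by the \rn{NSR-ChoBody} rules); this is what Part 3 really relies on.

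In the base case \(\gamma=\epsilon\), so \(\Proj{\Exp}{\epsilon}=\Exp\) and \(\Proj{\Exp'}{\epsilon}=\Exp'\): Part 1 holds because each root rule of \(\nsredv{\epsilon}\) is the corresponding rule of \(\sred\); Part 2 is vacuous, since \(\gamma'\not\preceq\epsilon\) forces \(\gamma'\) to begin with \(1\) or \(2\), but the root rules never fire on a \(\nondet\)-expression, so \(\Proj{\Exp}{\gamma'}\) is undefined, contradicting the hypothesis; Part 3 holds because \(\gamma'\preceq\epsilon\) forces \(\gamma'=\epsilon\), and zero \(\sred\)-steps suffice.

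For the inductive step take the last rule to be \rn{NSR-ChoBody-L} (the \rn{NSR-ChoBody-R} case being symmetric), so \(\Exp=\Exp_1\nondet\Exp_2\), \(\Exp'=\Exp_1'\nondet\Exp_2\), \(\gamma=1\cdot\gamma_0\), and \((\Def,\Exp_1)\nsredv{\gamma_0}(\Def,\Exp_1')\). For Part 1, \(\Proj{\Exp}{\gamma}=\Proj{\Exp_1}{\gamma_0}\) and \(\Proj{\Exp'}{\gamma}=\Proj{\Exp_1'}{\gamma_0}\), so the induction hypothesis gives \((\Def,\Proj{\Exp_1}{\gamma_0})\sred(\Def,\Proj{\Exp_1'}{\gamma_0})\), which is the claim. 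For Part 2, the hypothesis \(\gamma'\not\preceq\gamma\) excludes \(\gamma'=\epsilon\), so write \(\gamma'=i\cdot\gamma_0'\); if \(i=2\) then both \(\Proj{\Exp}{\gamma'}\) and \(\Proj{\Exp'}{\gamma'}\) equal \(\Proj{\Exp_2}{\gamma_0'}\) since the right branch is untouched; if \(i=1\) then \(\gamma_0'\not\preceq\gamma_0\) and \(\Proj{\Exp_1}{\gamma_0'}\) is defined, so the induction hypothesis for Part 2 gives \(\Proj{\Exp_1}{\gamma_0'}=\Proj{\Exp_1'}{\gamma_0'}\), i.e.\ \(\Proj{\Exp}{\gamma'}=\Proj{\Exp'}{\gamma'}\). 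For Part 3 I would not use the induction hypothesis at all: since \(\Proj{\Exp}{\gamma}\) is defined and \(\gamma'\preceq\gamma\), writing \(\gamma=\gamma'\cdot\delta\) with \(\delta=i_1\cdots i_n\), every prefix \(\gamma'\cdot i_1\cdots i_k\) of \(\gamma\) has a defined projection, and whenever \(\gamma''\cdot i\) is a prefix of \(\gamma\) the expression \(\Proj{\Exp}{\gamma''}\) must be of the form \(F_1\nondet F_2\) with \(\Proj{\Exp}{\gamma''\cdot i}=F_i\), so \(\Proj{\Exp}{\gamma''}\sred\Proj{\Exp}{\gamma''\cdot i}\) by \rn{SR-Cho-L} or \rn{SR-Cho-R}; chaining these single steps along \(\delta\) yields \((\Def,\Proj{\Exp}{\gamma'})\sred^*(\Def,\Proj{\Exp}{\gamma})\).

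I do not expect a genuine obstacle — the statement is essentially bookkeeping about positions in a finitely branching \(\nondet\)-tree. The two points that need care are the vacuity argument in the base case of Part 2, which hinges on the observation that no root reduction rule matches an expression of the form \(\Exp_1\nondet\Exp_2\), and keeping the prefix relation straight in the \(i=1\) versus \(i=2\) split of the inductive step. Since Part 3 is really a property of \(\Exp\) alone (the reduction enters only through the definedness of \(\Proj{\Exp}{\gamma}\)), one could also state and prove it separately; I would nonetheless fold it into the common induction for uniformity.
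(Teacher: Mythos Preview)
Your proposal is correct and follows essentially the same approach as the paper, which simply states that the properties follow by a straightforward induction on the derivation of \((\Def,\Exp)\nsredv{\gamma}(\Def,\Exp')\); you have supplied the details the paper omits. The only minor deviation is that for Part~3 you argue directly from the definedness of \(\Proj{\Exp}{\gamma}\) via repeated \rn{SR-Cho-L}/\rn{SR-Cho-R} steps rather than threading it through the induction hypothesis, but this is a presentational choice and not a substantive difference.
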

\begin{proof}
  The properties follow by a straightforward induction on the derivation of
  \((\Def,\Exp)\nsredv{\gamma}(\Def,\Exp')\). \qed
\end{proof}

We are now ready to prove Lemma~\ref{lem:konig}.

\begin{proof}[of Lemma~\ref{lem:konig}]
  We show the contraposition.
  Suppose \((\Def,\Exp)\) has an infinite reduction sequence with respect to \(\nsred\).
  By Lemma~\ref{lem:nsred-vs-nsredv}, there exists an infinite reduction sequence
  \[(\Def,\Exp)\nsredv{\gamma_1} (\Def,\Exp_1)\nsredv{\gamma_2}
(\Def,\Exp_2)\nsredv{\gamma_3}(\Def,\Exp_3)\nsredv{\gamma_4}\cdots.\]
  By Lemma~\ref{lem:inf-nsredv},
 there exists an infinite sequence:
  \[\gamma_{i_1}\preceq \gamma_{i_2}\preceq \gamma_{i_3}\preceq \cdots.\]
  such that \(i_1<i_2<i_3<\cdots\).
  Let us choose a maximal one among such sequences, i.e.,
  a sequence
  \[\gamma_{i_1}\preceq \gamma_{i_2}\preceq \gamma_{i_3}\preceq \cdots.\]
  such that, for any \(i_j\), 
  \(\gamma_{k}\preceq \gamma_{i_{j}}\) implies \(k=i_{j'}\) for some \(j'\le j\).
  Consider the fragment of the infinite reduction sequence:
  \[
  (\Def,\Exp_{i_{\ell-1}})\nsredv{\gamma_{i_{\ell-1}+1}} (\Def,\Exp_{i_{\ell-1}+1})
  \nsredv{\gamma_{i_{\ell-1}+2}}\cdots
  \nsredv{\gamma_{i_{\ell}-1}}
(\Def,\Exp_{i_{\ell}-1})\nsredv{\gamma_{i_\ell}}(\Def,\Exp_{i_\ell})
  \]
  for each \(\ell>0\). 
  (Here, we define \(\gamma_0 = \epsilon\), \(i_0=0\) and \(E_0 = E\).)
  By Lemma~\ref{lem:nsredv-vs-sred} (1)
  and \((\Def,\Exp_{i_{\ell}-1})\nsredv{\gamma_{i_\ell}}(\Def,\Exp_{i_\ell})\),
  we have
  \[(\Def,\Proj{\Exp_{i_{\ell}-1}}{\gamma_{i_\ell}})\sred
  (\Def,\Proj{\Exp_{i_{\ell}}}{\gamma_{i_\ell}}).\]
  By Lemma~\ref{lem:nsredv-vs-sred} (2) (note that
  since none of \(\gamma_{i_{\ell-1}+1},\ldots,\gamma_{i_{\ell}-1}\) is a
  prefix of \(\gamma_{i_{\ell}}\) by the assumption on maximality,
  \(\Proj{\Exp_{i_{\ell-1}}}{\gamma_{i_\ell}}\) is defined),
  we have
  \[\Proj{\Exp_{i_{\ell-1}}}{\gamma_{i_\ell}} =
  \Proj{\Exp_{i_{\ell-1}+1}}{\gamma_{i_\ell}} = \cdots =
  \Proj{\Exp_{i_{\ell}-1}}{\gamma_{i_\ell}}.\]
  Thus, together with Lemma~\ref{lem:nsredv-vs-sred} (3), we obtain:
  \[(\Def,\Proj{\Exp_{i_{\ell-1}}}{\gamma_{i_{\ell-1}}})\sred^*
  (\Def,\Proj{\Exp_{i_{\ell-1}}}{\gamma_{i_{\ell}}})\sred
  (\Def,\Proj{\Exp_{i_{\ell}}}{\gamma_{i_\ell}}).\]
  Therefore,
  we have an infinite reduction sequence
  \[(\Def,\Exp)=(\Def,\Proj{\Exp_{i_0}}{\gamma_{i_0}})\sred^+ (\Def,\Proj{\Exp_{i_1}}{\gamma_{i_1}})
  \sred^+ (\Def,\Proj{\Exp_{i_2}}{\gamma_{i_2}})
  \sred^+ (\Def,\Proj{\Exp_{i_3}}{\gamma_{i_3}})
  \sred^+ \cdots,\]
  as required. \qed
\end{proof}

Finally, the soundness (Theorem~\ref{thm:soundness}) follows from Lemmas~\ref{lem:infinitechain} and \ref{lem:konig}.

\section{Complete Definition of the Refinement Type System}
\label{sec:refinement-apx}
This section shows the complete definition of the refinement type system we discussed in Section~\ref{sec:refinement}.

First, we define the well-formedness conditions for types and type environments.
We write \(\FV(\pred)\) (\(\FV(\predenv)\), resp.)
for the set of variables occurring in \(\pred\) (\(\predenv\), resp.),
and \(\dom(\env)\) for the domain of \(\env\), i.e.,
\(\set{x \mid x\COL\ty\in\env}\).
The relations \(\tyok{\env}{\chty}\)
and \(\tyenvok{\env}{\predenv}{\chenv}\)
are defined by:

\infrule{\FV(\pred)\subseteq \dom(\env)\cup \set{\seq{x}}\\
  \tyok{\env, \seq{x}\COL\seq{\ty}}{\chty_i}\mbox{ for each $i\in\set{1,\ldots,k}$}}
{\tyok{\env}{\rch{\reg}{\seq{x}}{\pred}{\chty_1,\ldots,\chty_k}}}

\infrule{\tyok{\env}{\chty} \mbox{ for every $x\COL\chty\in\chenv$}\\
   \FV(\predenv) \subseteq \dom(\env)}
{\tyenvok{\env}{\predenv}{\chenv}}
  For example, \(x\COL \ty \p \rch{\reg}{y}{y<x}{\epsilon}:\OK\)
  holds but
  \(\emptyset \p \rch{\reg}{y}{y<x}{\epsilon}:\OK\) does not.

For every type judgment of the form \(\env;\predenv;\chenv \p P\),
we implicitly require that \(\tyenvok{\env}{\predenv}{\chenv}\) holds.
Similarly,  for $\env;\predenv;\chenv \p v\COL\chty$,
we require
that \(\tyenvok{\env}{\predenv}{\chenv}\) and \(\tyok{\env}{\chty}\) hold.

The complete list of typing rules is given in Figure~\ref{fig:refinement_type_system-complete}.

\begin{figure}[tb]
    \centering
    \small
    \begin{minipage}{0.4\linewidth}
        \centering
        \begin{prooftree}
            \AxiomC{}
            \RightLabel{\textsc{(RT-Nil)}}
            \UnaryInfC{$\env;\predenv;\chenv\p \textbf{0}$}
        \end{prooftree}
    \end{minipage}
    \begin{minipage}{0.55\linewidth}
        \centering
        \begin{prooftree}
            \AxiomC{$\env;\predenv;\chenv\p P_1$}
            \AxiomC{$\env;\predenv;\chenv\p P_2$}
            \RightLabel{\textsc{(RT-Par)}}
            \BinaryInfC{$\env;\predenv;\chenv\p P_1 \PAR P_2$}
        \end{prooftree}
    \end{minipage}
    \\
    \begin{minipage}{\linewidth}
        \centering
        \begin{prooftree}
            \AxiomC{$\env;\predenv;\chenv\p x\COL\rch{\reg}{\seq{y}}{\pred}{\seq{\chty}}$}
            \AxiomC{$\env,\seq{y}\COL\seq{\ty}; \predenv,\pred; \chenv,\seq{z}\COL\seq{\chty} \p P$}
            \RightLabel{\textsc{(RT-In)}}
            \BinaryInfC{$\env;\predenv;\chenv\p \inexp{x}{\seq{y}}{\seq{z}}P$}
        \end{prooftree}
    \end{minipage}
    \\
    \begin{minipage}{0.93\linewidth}
        \centering
        \infrule[RT-Out]
        {\env;\predenv;\chenv\p x\COL\rch{\reg}{\seq{y}}{\pred}{\seq{\chty}}\andalso
        \env;\predenv;\chenv\p \seq{v}\COL\seq{\ty}\andalso
        \predenv \vDash [\seq{v}/\seq{y}]\pred\\
        \env;\predenv;\chenv\p \seq{w}\COL[\seq{v}/\seq{y}]\seq{\chty}\andalso
        \env;\predenv;\chenv\p P}
        {\env;\predenv;\chenv \vdash \outexp{x}{\seq{v}}{\seq{w}}P}
    \end{minipage}
    \\
    \begin{minipage}{\linewidth}
        \centering
        \begin{prooftree}
            \AxiomC{$\env;\predenv;\chenv,x\COL\chty \p P$}
            \RightLabel{\textsc{(RT-Nu)}}
            \UnaryInfC{$\env;\predenv;\chenv \p (\nu x\COL\chty)P$}
        \end{prooftree}
    \end{minipage}
    \\
    \begin{minipage}{\linewidth}
        \centering
        \begin{prooftree}
            \AxiomC{$\env;\predenv;\chenv\p x\COL\rch{\reg}{\seq{y}}{\pred}{\seq{\chty}}$}
            \AxiomC{$\env,\seq{y}\COL\seq{\ty}; \predenv,\pred; \chenv,\seq{z}\COL\seq{\chty} \p P$}
            \RightLabel{\textsc{(RT-RIn)}}
            \BinaryInfC{$\env;\predenv;\chenv\p \rinexp{x}{\seq{y}}{\seq{z}}P$}
        \end{prooftree}
    \end{minipage}
    \\
    \begin{minipage}{\linewidth}
        \centering
        \begin{prooftree}
            \AxiomC{$\env; \predenv; \chenv \p v\COL\ty$}
            \AxiomC{$\env; \predenv, v \neq 0; \chenv \p P_1$}
            \AxiomC{$\env; \predenv, v =    0; \chenv \p P_2$}
            \RightLabel{\textsc{(RT-If)}}
            \TrinaryInfC{$\env;\predenv;\chenv\p \ifexp{v}{P_1}{P_2}$}
        \end{prooftree}
    \end{minipage}
    \\
    \begin{minipage}{\linewidth}
        \begin{prooftree}
            \AxiomC{$\env,\seq{x}\COL\seq{\ty}; \predenv; \chenv \vdash P$}
            \RightLabel{\textsc{(RT-LetND)}}
            \UnaryInfC{$\env; \predenv; \chenv \vdash \ndlet{x}{P}$}
        \end{prooftree}
    \end{minipage}
    \\
    \begin{minipage}{0.45\linewidth}
        \centering
        \begin{prooftree}
            \AxiomC{$x \COL \ty \in \env$}
            \RightLabel{\textsc{(RT-Var-Int)}}
            \UnaryInfC{$\env;\predenv;\chenv \p x\COL\ty$}
        \end{prooftree}
    \end{minipage}
    \begin{minipage}{0.45\linewidth}
        \centering
        \begin{prooftree}
            \AxiomC{$x \COL \chty \in \chenv$}
            \RightLabel{\textsc{(RT-Var-Ch)}}
            \UnaryInfC{$\env;\predenv;\chenv \p x\COL\chty$}
        \end{prooftree}
    \end{minipage}
    \\
    \begin{minipage}{0.45\linewidth}
        \centering
        \begin{prooftree}
            \AxiomC{}
            \RightLabel{\textsc{(RT-Int)}}
            \UnaryInfC{$\env;\predenv;\chenv \p i\COL\ty$}
        \end{prooftree}
    \end{minipage}
    \begin{minipage}{0.45\linewidth}
        \centering
        \begin{prooftree}
            \AxiomC{$\env;\predenv;\chenv \p \seq{v}\COL\seq{\ty}$}
            \RightLabel{\textsc{(RT-Op)}}
            \UnaryInfC{$\env;\predenv;\chenv \p \op(\seq{v})\COL\ty$}
        \end{prooftree}
    \end{minipage}
    \normalsize
    \caption{Typing rules of the refinement type system for the $\pi$-calculus}
    \label{fig:refinement_type_system-complete}
\end{figure}

\section{Refinement Type System with Subtyping}
\label{sec:subtyping}

As mentioned in Section~\ref{sec:implementation},
the implementation is based on the following extension of the refinement type system in Section~\ref{sec:rtype}.

The set of \emph{refinement i/o channel types}, ranged over by $\chty$, is given by:
\[ 
    \chty ::= \ioch{\reg}{\seq{x}}{\inty{\pred}}{\inty{\seq{\chty}}}{\outy{\pred}}{\outy{\seq{\chty}}}
\]
Here, \(\ioch{\reg}{\seq{x}}{\inty{\pred}}{\inty{\seq{\chty}}}{\outy{\pred}}{\outy{\seq{\chty}}}\)
is the type of channels used for \emph{receiving} tuples \((\seq{x};\seq{y})\)
such that \(\seq{x}\) satisfies \(\pred_I\) and \(\seq{y}\) have types
\(\inty{\seq{\chty}}\),
and for \emph{sending} tuples \((\seq{x};\seq{y})\)
such that \(\seq{x}\) satisfies \(\pred_O\) and \(\seq{y}\) have types
\(\outy{\seq{\chty}}\).
The distinction between the types of input (i.e. received) values  and those of
output (i.e. sent) values has been inspired by
the type system of Yoshida and Hennessy~\cite{DBLP:conf/concur/YoshidaH99}.
It leads to a more precise type system than Pierce and Sangiorgi's subtyping,
and is convenient for automatic refinement type inference~\cite{Pierce96MSCS}
(because we need not infer input/output modes and perform case analysis on the modes).

The subtyping relation on channel types is defined by:
\infrule[RT-Sub-Ch]
{\predenv,\inty{\pred} \vDash \inty{\pred}' \andalso
 \env,\seq{x}\COL\seq{\ty};\predenv,\inty{\pred}\p  \inty{\seq{\chty}} \subtype \inty{\seq{\chty}}' \\
 \predenv,\outy{\pred}' \vDash \outy{\pred} \andalso
 \env,\seq{x}\COL\seq{\ty};\predenv,\outy{\pred}'\p  \outy{\seq{\chty}}' \subtype \outy{\seq{\chty}}
}
{\env;\predenv\p 
    \ioch{\reg}{\seq{x}}{\inty{\pred}}{\inty{\seq{\chty}}}{\outy{\pred}}{\outy{\seq{\chty}}} 
    \subtype 
    \ioch{\reg}{\seq{x}}{\inty{\pred}'}{\inty{\seq{\chty}}'}{\outy{\pred}'}{\outy{\seq{\chty}}'}
}
Note that the 
 channel type $\ioch{\reg}{\seq{x}}{\inty{\pred}}{\inty{\seq{\chty}}}{\outy{\pred}}{\outy{\seq{\chty}}}$
is covariant on $\inty{\pred}$ and $\inty{\seq{\chty}}$, 
and contravariant on $\outy{\pred}$ and $\outy{\seq{\chty}}$.

We make the following two modifications to the typing rules.
\begin{enumerate}
\item
  We add the following subsumption rule.
        \infrule[RT-Sub]
                {\env;\predenv;\chenv\p v\COL \chty\andalso \env;\predenv\p \chty \subtype \chty'}
                {\env;\predenv;\chenv\p v\COL \chty'}

  \item
    We refine the well-formedness condition on types and type environments by:
\infrule{\FV(\pred)\subseteq \dom(\env)\cup \set{\seq{x}}\\
  \tyok{\env, \seq{x}\COL\seq{\ty}}{\inty{\seq{\chty}}}\\
  \tyok{\env, \seq{x}\COL\seq{\ty}}{\outy{\seq{\chty}}}
}
{\tyok{\env}{\ioch{\reg}{\seq{x}}{\inty{\pred}}{\inty{\seq{\chty}}}{\outy{\pred}}{\outy{\seq{\chty}}}}}

\infrule{
   \FV(\predenv) \subseteq \dom(\env)
}
{\tyenvok{\env}{\predenv}{\emptyset}}

\infrule{\tyok{\env}{\ioch{\reg}{\seq{x}}{\inty{\pred}}{\inty{\seq{\chty}}}{\outy{\pred}}{\outy{\seq{\chty}}}}\\
  \predenv, \outy{\pred} \models \inty{\pred}
  \andalso
  \env,\seq{x}\COL\seq{\ty};\predenv,\outy{\pred}\p \outy{\seq{\chty}} \subtype
  \inty{\seq{\chty}}}
{\tyenvok{\env}{\predenv}{\chenv,y\COL\ioch{\reg}{\seq{x}}{\inty{\pred}}{\inty{\seq{\chty}}}{\outy{\pred}}{\outy{\seq{\chty}}}}}

The requirement for the subtyping relation in the last rule
 ensures the consistency between the types of
values expected by a receiver process and those actually output by a sender process;
for example, the channel type
\(\ioch{\reg}{x}{x>0}{\epsilon}{x<0}{\epsilon}\) is judged to be ill-formed,
because the type indicates that a receiver process expects a positive value \(x\) but 
a sender will output a negative value on the channel.
\end{enumerate}

The following example demonstrates the usefulness of subtyping for refinement channel
types.
\begin{example}
  \label{ex:subtyping}
  Let us consider the following process:
  \[
  \begin{array}{l}
    \rinexp{\pre}{x}{r}\soutatom{r}{x-1}\\
\PAR
\srinexp{f}{x}\ifexp{x<0}{\zeroexp}
        {\nuexp{s}(\outatom{\pre}{x}{s}\PAR \sinexp{s}{y}\soutatom{f}{y})}\\
\PAR \soutatom{f}{100}\\
\PAR \rinexp{c}{x}{r}\letexp{y}{\ndint}\soutatom{r}{y}\\
\PAR \soutatom{d}{\pre}\PAR \soutatom{d}{c}
  \end{array}
  \]
  The process consisting of the first three lines
  is a variation of the process in Example~\ref{ex:weakeness-of-basic-transformation},
  which is obviously terminating.
  Without the fourth and fifth lines, we would be able to assign
  the type \(\rch{\reg_1}{x}{\TRUE}{\rch{\reg_2}{y}{y<x}{\epsilon}}\)
  in the refinement type system in Section~\ref{sec:refinement},
  and reduce the process to a terminating program.

  The processes on the \skchanged{fifth} line, however, force us to assign the same type
  to \(\pre\) and \(c\) in the refinement type system in Section~\ref{sec:refinement},
  and thus we can assign only \(\rch{\reg_1}{x}{\TRUE}{\rch{\reg_2}{y}{\TRUE}{\epsilon}}\)
  to \(\pre\), failing to transform the process to a non-terminating program.

  With subtyping, we can assign the following types to \(\pre\), \(c\), and \(d\):
  \[
  \begin{array}{l}
    \pre\COL
    \ioch{\reg_1}{x}{\TRUE}{\ioch{\reg_2}{y}{\TRUE}{\epsilon}{y<x}{\epsilon}}{\TRUE}{\ioch{\reg_2}{y}{\TRUE}{\epsilon}{y<x}{\epsilon}}\\
    c\COL
    \ioch{\reg_1}{x}{\TRUE}{\ioch{\reg_2}{y}{\TRUE}{\epsilon}{\TRUE}{\epsilon}}{\TRUE}{\ioch{\reg_2}{y}{\TRUE}{\epsilon}{\TRUE}{\epsilon}}\\
    d\COL
    \ioch{\reg_0}{\epsilon}{\TRUE}{\chty}{\TRUE}{\chty}\\
    \mbox{where}\\
    \chty =
    \ioch{\reg_1}{x}{\TRUE}{\ioch{\reg_2}{y}{\TRUE}{\epsilon}{y<x}{\epsilon}}{\TRUE}{\ioch{\reg_2}{y}{\TRUE}{\epsilon}{\TRUE}{\epsilon}}\\
  \end{array}
  \]
  Note that the types of \(\pre\) and \(c\) are
  subtypes of \(\chty\).
  Here, the type of \(\pre\) indicates that
  the value \(y\) sent along the second argument \(r\) should be smaller than
  the first argument \(x\).
  Thus, the process on the second line is translated to the following
  function definition:
  \[
 \begin{array}{l}
\Fname{\reg_f}(x)=\ifexp{x<0}{\skipexp}\\\qquad\qquad{
  (\Fname{\pre}(x)\nondet (\letexp{y}{\ndint}\assexp{y<x}\Fname{\reg_f}(y)))}
\end{array}
\]

\end{example}

}

\end{document}